\documentclass[11pt]{article}
\usepackage[bottom=1in,left=1in, right=1in, top=1in]{geometry}
\geometry{letterpaper}            

\usepackage{tabularx}
\usepackage{enumitem}
\usepackage{graphicx}
\usepackage{wrapfig}
\usepackage{subfigure}
\usepackage{amsmath}
\usepackage{amsfonts}
\usepackage{cite}
\usepackage{hyperref}
\hypersetup{
	colorlinks=true, 
	linktoc=all,     
	linkcolor=blue,  
	citecolor = black
}

\usepackage[normalem]{ulem}



\newcommand{\cbcut}{\textsc{CB Hyper-st-Cut}}
\newcommand{\cbw}{\textsc{CBcut}$(r,\{w_1, w_2, \hdots, w_q\})$}
\newcommand{\cbf}{\textsc{CBcut}$(4,\{1, w_2\})$}
\usepackage{amsthm}
\usepackage{algorithm}
\usepackage[noend]{algorithmic}
\theoremstyle{plain}
\newtheorem{theorem} {Theorem} [section]
\newtheorem{lemma} [theorem] {Lemma}
\newtheorem{corollary} [theorem] {Corollary}
\theoremstyle{definition}
\newtheorem{definition}{Definition}
\theoremstyle{definition}

\theoremstyle{definition}
\newtheorem{observation}{Observation}

\usepackage{enumitem}
\DeclareMathOperator*{\minimize}{minimize}

\DeclareMathOperator{\argmax}{argmax}

\usepackage{mathtools}

\DeclarePairedDelimiter\floor{\lfloor}{\rfloor}
\newcommand{\cut}{\textbf{cut}}
\newcommand{\st}{$s$-$t$}

\newcommand{\V}{\mathcal{V}}
\newcommand{\E}{\mathcal{E}}

\newcommand{\vw}{\textbf{w}}

\newcommand{\vf}{\textbf{f}}

\newcommand{\vhw}{\hat{\textbf{w}}}
\newcommand{\vhf}{\hat{\textbf{f}}}

\usepackage{tikz}

\usepackage{authblk}

\title{On the tractability and approximability of non-submodular cardinality-based $s$-$t$ cut problems in hypergraphs\thanks{Both authors are supported by the Army Research Office (ARO award \#W911NF‐24-1-0156). We thank Jon Kleinberg and Magnus Wahlstr{\"o}m for several helpful conversations.} }

\author{Vedangi Bengali\thanks{Department of Computer Science and Engineering, Texas A\&M University; \texttt{vedangibengali@tamu.edu}.} \hspace{1cm}
Nate Veldt\thanks{Department of Computer Science and Engineering, Texas A\&M University; \texttt{nveldt@tamu.edu}.}}
\date{}


\begin{document}

	\maketitle
	\begin{abstract}
    A minimum $s$-$t$ cut in a hypergraph is a bipartition of vertices that separates two nodes $s$ and $t$ while minimizing a hypergraph cut function. The \emph{cardinality-based} hypergraph cut function assigns a cut penalty to each hyperedge based on the number of nodes in the hyperedge that are on each side of the split. Previous work has shown that when hyperedge cut penalties are submodular, this problem can be reduced to a graph $s$-$t$ cut problem and hence solved in polynomial time. NP-hardness results are also known for a certain class of non-submodular penalties, though the complexity remained open in many parameter regimes. In this paper we highlight and leverage a connection to Valued Constraint Satisfaction Problems to show that the problem is NP-hard for all non-submodular hyperedge cut penalty, except for one trivial case where a 0-cost solution is always possible. We then turn our attention to approximation strategies and approximation hardness results in the non-submodular case. We design a strategy for projecting non-submodular penalties to the submodular region, which we prove gives the optimal approximation among all such projection strategies. We also show that alternative approaches are unlikely to provide improved guarantees, by showing it is UGC-hard to obtain a better approximation in the simplest setting where all hyperedges have exactly 4 nodes.
    \end{abstract}

        \section{Introduction}

A cut in a graph is a set of edges whose removal partitions the nodes into disconnected components. Finding small graph cuts is a common algorithmic primitive for clustering and partitioning applications such as image segmentation, community detection in social networks, and workload partitioning tasks in parallel computing. In the past several years there has been an increasing interest in solving cut problems over \textit{hypergraphs}, where nodes are organized into multiway relationships called hyperedges~\cite{ccatalyurek2023more, veldt2022hypergraph,zhu2022hypergraph,chekuri2018minimum,panli2017inhomogeneous,panli_submodular}. While edges in a graph model pairwise relationships, hyperedges can directly model multiway relationships such as group social interactions, groups of biological samples with similar gene expression patterns, chemical interactions involving multiple reagents, or groups of interdependent computational tasks in parallel computing applications. 

Given a bipartition of nodes, the standard hypergraph cut function simply counts the number of hyperedges that span both partitions~\cite{lawler1973}. Although this is a straightforward generalization of the graph cut function, it does not capture the fact that there are multiple different ways to split the nodes of a hyperedge into two clusters, each of which may be more or less desirable depending on the application. This had led to recent generalized hypergraph cut functions that assign different cut penalties depending on how the hyperedge is split, with generalized cut penalties captured by a \emph{splitting function} defined for each hyperedge~\cite{veldt2022hypergraph,panli_submodular,panli2017inhomogeneous,zhu2022hypergraph,fountoulakis2021local,chen2023submodular}. Many applications focus specifically on hyperedge cut penalties that are cardinality-based, meaning the penalties depend only on the number of nodes of a hyperedge that are on each side of a split. Veldt et al.~\cite{veldt2022hypergraph} provided a systematic study of the hypergraph $s$-$t$ cut problem for cardinality-based cut functions, showing that this problem can be reduced to an $s$-$t$ cut problem in a directed weighted graph if and only if all hyperedge splitting functions are submodular. These fundamental primitives for hypergraph $s$-$t$ cuts have since been used as subroutines for other hypergraph analysis techniques, such as localized clustering and semi-supervised learning algorithms on large hypergraphs~\cite{liu2021strongly,veldt2020minimizing}, new approaches for dense subhypergraph discovery~\cite{huang2024densest}, and faster algorithms for decomposable submodular function minimization~\cite{veldt2021approximate}. Specific applications of these tools include speeding up algorithms for benchmark image segmentation tasks~\cite{veldt2021approximate,veldt2023augmented}, improved product classification in a large retail product hypergraph~\cite{veldt2020minimizing}, finding related posts in online Q\&A forums~\cite{veldt2020minimizing}, and detecting related restaurants in a large Yelp hypergraph~\cite{liu2021strongly}.

In addition to their algorithmic techniques for submodular cut penalties, Veldt et al.~\cite{veldt2022hypergraph} proved that the cardinality-based $s$-$t$ cut problem is NP-hard for certain specific non-submodular splitting functions. They also highlighted a trivial setting where penalties are not submodular but an optimal (zero-cost) solution can be found by placing one terminal node ($s$ or $t$) in a cluster by itself. The latter rules out the possibility of showing that the cardinality-based $s$-$t$ cut problem is poly-time solvable if and only if cut penalties are submodular. The complexity of the problem remained unknown for a large class of cut penalty choices. Settling these complexity results, even if only for the case of 4-uniform hypergraphs, was later included in a list of open problems in applied combinatorics~\cite{aksoy2023seven}. Very recently, Adriaens et al.~\cite{adriaens2024improved} closed the gap by proving that the problem is NP-hard for all (non-trivial) cut penalties outside the submodular region, subject to the condition that the penalties are polynomially bounded in terms of the size of the hypergraph. A natural question is whether we can establish hardness results that are independent of this assumption about polynomially-bounded cut penalties. Another direction is to develop approximation techniques and refined hardness-of-approximation results for non-submodular cut penalties. 

\paragraph{The present work: finalized complexity results and optimal approximation techniques.}
In our work, we begin by establishing the relationship between generalized hypergraph cut problems and an earlier notion of Valued Constraint Satisfaction Problems (VCSPs) from the theoretical computer science literature~\cite{cohen2006algebraic,cohen2011algebraic}. We prove an equivalence result between the cardinality-based $s$-$t$ cut problem and a specific class of boolean VCSPs. By translating existing complexity dichotomy results for boolean VCSPs, we prove that the cardinality-based hypergraph $s$-$t$ cut problem is NP-hard for every choice of non-submodular cut penalties, aside from the trivial setting where a zero-cost solution is possible. Our approach holds for all finite splitting penalties and thus does not require any assumption about polynomially-bounded penalties.

Next we turn our to approximation algorithms and approximation hardness results for cardinality-based hypergraph $s$-$t$ problems outside the submodular region. For 4-uniform hypergraphs, the complexity of the problem depends on a single cut penalty $w_2$ (the penalty for splitting a hyperedge evenly), and there is a simple strategy for converting a non-submodular splitting function to the closest submodular function, with known approximation factors~\cite{veldt2022hypergraph}. For larger hyperedges, there can be far more cut penalties (since there are more ways to split a hyperedge), and finding the best way to project a non-submodular function to a submodular function is more nuanced. We provide a simple strategy for this projection step, by viewing it as a piecewise linear function approximation problem. We prove our strategy provides the optimal approximation factor among all methods that are based on replacing a non-submodular function with a submodular function. We complement these approximation techniques with approximation hardness results, specifically for the simplest setting where the hypergraph is $4$-uniform. By considering the Basic Linear Programming relaxation for the underlying VCSP~\cite{ene2013local}, we prove that our approximations based on projecting to the submodular region are the best possible assuming the Unique Games Conjecture.

	\section{Technical Preliminaries}
\label{sec:prelims}
We begin by reviewing formal definitions for generalized hypergraph cut problems, along with needed technical background on Valued Constraint Satisfaction Problems.

\subsection{Generalized hypergraph minimum \st{} cuts}
Consider a hypergraph $\mathcal{H}=(\mathcal{V}, \mathcal{E})$, where each hyperedge $e \in \E$ is a set of $|e|\ge 2$ nodes. Given a set of nodes $S$, a hyperedge is cut when its nodes are split between sets $S$ and $\bar S = V\backslash S$.  
A straightforward extension of the graph cut function to hypergraphs is given by the commonly studied \textit{all-or-nothing} cut function, which simply counts the number of hyperedges crossing a bipartition (or the sum of scalar weights of hyperedges if the hypergraph is weighted)~\cite{lawler1973}. Thus, every way of splitting the nodes of a single hyperedge leads to the same penalty for cutting that hyperedge.

A more general approach involves \textit{hyperedge splitting functions}, which assign a non-negative penalty for each of the $2^e$ potential ways a hyperedge $e$ can be split~\cite{veldt2022hypergraph,li2018submodular,panli2017inhomogeneous}. 
Formally, we define a splitting function $\vw_e:2^e\rightarrow \mathbb{R}$ for each hyperedge $e\in \E$ that satisfies the following properties:
\begin{align*}
    &\vw_e(A) \ge 0 \quad \quad \forall A \subseteq e\\
    &\vw_e(A) = \vw_e(e\setminus A) \quad \quad \forall A \subseteq e\\
    &\vw_e(e) = \vw_e(\emptyset) = 0.
\end{align*}
For a set $S \subseteq \V$ and $\{s,t\} \subseteq \V$ as source and sink nodes respectively, the generalized hypergraph minimum \st{} cut problem is then defined to be:
\begin{align}
\label{eq:hyperstcut}
\begin{split}
    &\minimize \quad \textbf{cut}_{\mathcal{H}}(S) = \sum_{e \in \partial S}\vw_e(e \cap S) = \sum_{e \in \E}\vw_e(e \cap S) \quad \text{subject to } \; s\in S, t\in \bar S
\end{split}
\end{align}
where $\partial S = \{ e \in \E : e\cap S \neq \emptyset, e\cap \bar{S} \neq \emptyset\}$ is the set of cut hyperedges. A splitting function is \emph{submodular} if for every $A,B \in 2^e$ it satisfies:
\begin{align}
\label{eq:submodular}
    \vw_e(A \cap B) + \vw_e(A \cup B) \leq \vw_e(A) + \vw_e(B).
\end{align}
When all splitting functions are submodular, the cut function $\cut_\mathcal{H}$ is a sum of submodular functions and hence submodular. The minimum $s$-$t$ cut problem is then polynomial-time solvable, as it is a special case of submodular function minimization.

\subsection{Cardinality-based minimum \st{} cuts}
\textit{Cardinality-based} functions are defined based only on the number of nodes on each side of the partition. Formally, these functions satisfy the additional condition: 
\begin{align}
\label{eq:card}
    \vw_e(A) = \vw_e(B) \quad \forall A,B \in 2^e \text{ where } |A|=|B|.
\end{align}
For a hyperedge of size $r = |e|$, $\textbf{w}_e$ can be completely characterized by $q=\floor{r/2}$ splitting penalties denoted as $w_i$ for $i \in \{0,1,... , q\}$, where $\textbf{w}_e(A) = w_i$ is the penalty for every $A \subseteq e$ such that $\min \{|A|,|e\backslash A|\} = i$. Observe that $w_0 = 0$ always. We also often refer to these as \textit{splitting parameters}
as they can be viewed as parameters defining a class of hypergraph $s$-$t$ cut problems.

\paragraph{The $r$-\cbcut{} problem.}
In practice, hyperedges can be of different sizes and there may be reasons to consider associating different splitting functions to different hyperedges. However, for the purpose of understanding fundamental tractability results, we restrict our attention to $r$-uniform hypergraphs where all hyperedges have the same cardinality-based splitting function. Tractability and hardness results for other cardinality-based hypergraph \st{} cut problems (where hyperedges can have different sizes and splitting functions) can be easily determined by extending results for the uniform case.

Formally, let $\mathcal{H} = (\V, \mathcal{E})$ be an $r$-uniform hypergraph and let $\{w_1, w_2, \hdots, w_q\}$ be a set of non-negative splitting penalties where $q = \floor{r/2}$. The $r$-\cbcut{} problem is given by
\begin{align}
 \label{eq:cbcut}
    \minimize \quad \textbf{cut}_{\mathcal{H}}(S) = \sum_{i=1}^{q} w_i \cdot |\partial S_i| \quad \text{ subject to } s \in S \text{ and } t \in \bar{S}
\end{align}
where $\partial S_i = \{e \in \E:|S\cap e| \in \{i,r-i\}\}$ is the set of hyperedges split by the set $S$ with $i$ nodes on the smaller side of the cut.  When we need to explicitly specify the splitting penalties $\{w_1, w_2, \hdots, w_q\}$, we will denote this problem by $\text{CBcut}(r,\{w_1, w_2, \hdots, w_q\})$. We say that $r$-CB-cut is \emph{tractable} for the splitting parameters $\{w_1, w_2, \hdots, w_q\}$ if every instance of $\text{CBcut}(r,\{w_1, w_2, \hdots, w_q\})$ can be solved in polynomial time. We say that it is NP-hard for these splitting parameters if there exists an NP-hard problem that can be reduced to $\text{CBcut}(r,\{w_1, w_2, \hdots, w_q\})$ in polynomial time. 

For some of our results it will be convenient to consider weighted variants of the problem. For the first variant, we assume that each hyperedge $e \in \E$ is associated with a scalar rational weight $\omega_e \geq 0$, and we scale the cut penalty at this edge by this weight. Formally, the \textsc{weighted} $r$-\cbcut{} problem is then defined to be
\begin{align}
 \label{eq:weightedcard}
    \minimize \quad \textbf{cut}_{\mathcal{H}}(S) = \sum_{i=1}^{q} w_i \cdot W(\partial S_i) \quad \text{ subject to } s \in S \text{ and } t \in \bar{S}
\end{align}
where $W(\partial S_i) = \sum_{e \in \partial S_i} \omega_e$. Another slight variant is to allow a mixture of size-$r$ hyperedges and size-2 hyperedges (i.e., standard graph edges), both of which can be weighted. Formally we are given a hypergraph $\mathcal{H} = (\V, \E \cup E)$ where $\E$ is a set of (scalar-weighted) hyperedges of size-$r$, and $E$ is a set of weighted edges. The $r$-\cbcut{} \textsc{with edges} problem is given by
\begin{align}
 \label{eq:cardwithedges}
    \minimize \quad \textbf{cut}_{\mathcal{H}}(S) = \cut_E(S) + \sum_{i=1}^{q} w_i \cdot W(\partial S_i) \quad \text{ subject to } s \in S \text{ and } t \in \bar{S}
\end{align}
where $\partial S_i$ denotes the size-$r$ hyperedges with $i$ nodes on the small side of the split, and $\textbf{cut}_E(S)$ is the standard graph cut function for the graph defined by edges $E$. These objectives are all closely related. Objective~\eqref{eq:cbcut} is a special case of Objective~\eqref{eq:weightedcard}. An approximation-preserving reduction in the other direction is possible if we scale weights to be integers and replace weights with multiple copies of the same hyperedge. For this reduction we just need scalar weights to be polynomial in terms of the hypergraph size. We also have the following observation.
\begin{observation}
    Objectives~\eqref{eq:weightedcard} and~\eqref{eq:cardwithedges} are equivalent with respect to approximations.
\end{observation}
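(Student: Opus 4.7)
The plan is to establish approximation-preserving reductions in both directions between Objectives~\eqref{eq:weightedcard} and~\eqref{eq:cardwithedges}. One direction is immediate: Objective~\eqref{eq:weightedcard} is exactly the special case of Objective~\eqref{eq:cardwithedges} with empty edge set, so any $\alpha$-approximation algorithm for~\eqref{eq:cardwithedges} yields an $\alpha$-approximation for~\eqref{eq:weightedcard}.

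For the reverse direction, I would construct gadgets that replace each standard graph edge in an instance of~\eqref{eq:cardwithedges} with a small collection of size-$r$ hyperedges. Specifically, for each edge $e = (u,v) \in E$ of weight $\omega_e$, I introduce two disjoint sets of $r-2$ fresh auxiliary nodes $A_e$ and $B_e$, and add two size-$r$ hyperedges $\{u,v\} \cup A_e$ and $\{u,v\} \cup B_e$, each carrying a weight $c$ to be chosen. To pin the nodes of $A_e$ to the $s$-side and the nodes of $B_e$ to the $t$-side in any near-optimal solution, I add anchor hyperedges consisting of $s$ together with $A_e$ (padded to size $r$) and $t$ together with $B_e$, each with a very large weight $M$ chosen to exceed the total weight of all other hyperedges and edges in the instance. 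Then in any solution of cost comparable to optimal, the anchor hyperedges must be uncut, and hence all auxiliary nodes are pinned as intended.

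A direct case analysis shows that, once the pinning is enforced, the combined contribution of the two main gadgets is $c \cdot w_2$ when $u$ and $v$ lie on the same side of the cut and $2 c \cdot w_1$ when they are separated. Setting $c = \omega_e /(2 w_1 - w_2)$ makes the difference between these two cases exactly $\omega_e$, so the transformed instance's objective equals the original~\eqref{eq:cardwithedges} objective plus a fixed constant offset that does not depend on the chosen cut. Since this offset can be computed in closed form ahead of time and subtracted from the output of any approximation algorithm run on the transformed instance, the reduction preserves approximation ratios.

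The main obstacle is the assumption $2 w_1 - w_2 > 0$, which holds in the submodular regime but need not hold for arbitrary splitting parameters. I would handle the degenerate regimes separately: when $w_1 = 0$, both objectives admit a trivial zero-cost solution by isolating $s$ or $t$, so the equivalence is vacuous; when $2 w_1 \leq w_2$, I would replace the symmetric pair of gadgets above with an asymmetric version that pins $k$ auxiliary nodes to the $s$-side and $r - 2 - k$ to the $t$-side inside a single main hyperedge, chosen so that a different pair of cardinality weights $w_i, w_{i+1}$ (rather than $w_1, w_2$) governs the distinction between ``$u,v$ on the same side'' and ``$u,v$ separated.'' Verifying that at least one such choice of $k$ produces a strictly positive difference for every non-trivial parameter vector $(w_1, \ldots, w_q)$ is the main technical point, but once it is in place the rest of the argument (offset computation, anchor weights, polynomial blow-up) carries through unchanged.
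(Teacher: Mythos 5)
The easy direction matches the paper, but your reverse reduction has a genuine flaw: the constant offset. Your gadget pair costs $c\,w_2$ when $u$ and $v$ are on the same side and $2c\,w_1$ when they are separated, so the transformed objective equals the original objective plus a fixed additive constant $K=\sum_e c_e w_2$. Subtracting $K$ from the \emph{value} returned by an $\alpha$-approximation algorithm does not recover an $\alpha$-approximation for the original objective: if $F'(S)=F(S)+K$ and the algorithm guarantees $F'(S)\le \alpha F'(S^*)$, then $F(S)\le \alpha F(S^*)+(\alpha-1)K$, and $K$ can be arbitrarily large relative to $\mathrm{OPT}$ (e.g., when the optimal cut avoids almost all edges). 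An approximation-preserving reduction needs the gadget to contribute \emph{zero} when the edge is uncut, exactly like the graph cut function it simulates. A secondary gap is that your construction only works outright when $2w_1>w_2$, and the promised fix for the regime $2w_1\le w_2$ (choosing an asymmetric pinning so that some pair $w_i,w_{i+1}$ gives a positive difference) is left unverified; it is not clear such a $k$ always exists, and the two ``together'' cases of an asymmetric gadget generally incur different penalties, so it no longer simulates an edge cleanly.

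The paper's reduction avoids all of this by \emph{not} pinning the auxiliary nodes: each edge $(x,y)$ becomes a single size-$r$ hyperedge $(x,y,a_1,\dots,a_{r-2})$ whose fresh nodes appear nowhere else. If $x$ and $y$ are together, all auxiliary nodes can join them and the hyperedge costs $0$; if they are separated, the hyperedge is necessarily cut and, with the auxiliary nodes placed optimally, costs $\min_{i\ge 1} w_i$ times the hyperedge weight, which is normalized to $\omega_e$ by scaling the weight. Suboptimal placements of the free nodes only increase the cost, which is harmless in the direction needed, so the reduction is approximation-preserving with no offset, no anchor gadgets, and no case analysis on the splitting parameters. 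You should rework the reverse direction along these lines.
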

To see why, note that \textsc{Weighted} $r$-\cbcut{} is a special case of $r$-\cbcut{} \textsc{with edges} where $E = \emptyset$. We can reduce the latter to the former in an approximation-preserving way by replacing each $(x,y) \in E$ with a size-$r$ hyperedge $(x,y, a_1, \hdots, a_{r-2})$, where $(a_1, \hdots, a_{r-2})$ are new nodes that only show up in this hyperedge. We see therefore that proving a result for one of these objectives applies the same result for the other (modulo certain assumptions about polynomially-bounded weights in some cases).

\paragraph{Prior tractability and hardness results.}
Veldt et al.~\cite{veldt2022hypergraph} proved that a cardinality-based splitting function for an $r$-node hyperedge is submodular if and only if its splitting penalties satisfy
\begin{equation}
\label{eq:submodular}
\begin{split}
        2w_1 &\geq w_2\\
        2w_j &\geq w_{j-1}+w_{j+1} \text{ for } j= 2,\hdots,q-1\\
        w_q &\geq w_{q-1} \ge \hdots \ge w_2 \ge w_1 \ge 0.
\end{split}
\end{equation}
Thus, if $\{w_1, w_2, \hdots, w_q\}$ satisfies these conditions, \cbw{} is polynomial-time solvable. The problem is also (trivially) polynomial-time solvable if $w_1 = 0$, since in this case one can set $S = \{s\}$ and the resulting cut penalty is zero. This holds independent of the values for $\{w_2, w_3, \hdots, w_q\}$, including values for which the overall splitting function is non-submodular. This problem is called \textsc{Degenerate} \cbcut{}. If $w_1 > 0$, we can scale all penalties without loss of generality so that $w_1 = 1$; we will typically assume this scaling for all non-degenerate cases we consider throughout the manuscript. Veldt et al.~\cite{veldt2022hypergraph} proved that $r$-\cbcut{} is NP-hard, via reduction from maximum cut, for splitting penalties in the non-submodular regime that satisfy the following two constraints:
    \begin{align*}
        &w_1 > w_j > 0 \quad \text{where $j \in {2,3,\ldots,\floor {r/2}}$}\\
        &w_i \ge w_j \quad \forall i \neq 1.
    \end{align*}
In recent independent and concurrent work, Adriaens et al.~\cite{adriaens2024improved} showed improved reductions from maximum cut to prove that \cbw{} is NP-hard for all non-submodular penalties, i.e., all choices of $\{w_1 > 0, w_2, \hdots, w_q\}$ that violate any of the constraints in Eq.~\eqref{eq:submodular}. For their reduction, splitting penalties must be polynomially bounded in terms of the hypergraph size.

\subsection{Valued Constraint Satisfaction Problems}
\textit{Valued Constraint Satisfaction Problems} (VCSPs) provide a general framework to model and solve optimization problems over a language involving variables, constraints, and value assignments. Cohen et al.~\cite{cohen2004complete} studied the complexity of VCSPs defined over Boolean variables. We will model the $r$-\cbcut{} problem as a special type of Boolean VCSP and then translate existing complexity results for the latter problem to hypergraph cut problems. We review the notation and definitions established by Cohen et al.~\cite{cohen2004complete}.\footnote{These authors gave generalized definitions for VSCPs that also apply to the non-Boolean case, but we restrict to Boolean variables since this suffices for our purposes.}

An instance of Boolean VCSP is given by a tuple $\mathcal{P} = \langle V,C,\mathcal{X} \rangle$ where
\begin{itemize}
    \item $V = \{v_1, v_2, \hdots, v_n\}$ is a finite set of Boolean variables.
    \item $\mathcal{X}\subseteq \mathbb R^+$ is a set of possible costs.
    \item $C$ is a set of constraints, each defined by a pair $c = \langle \sigma,\phi \rangle$, where the \emph{scope} $\sigma \subseteq V$ defines a set of variables the constraint applies to, and $\phi: \{0,1\}^{|\sigma|} \rightarrow \mathcal{X} $ is a \emph{cost} function that maps every possible assignment of these variables to a cost in $\mathcal{X}$.
\end{itemize}
If $\phi \colon \{0,1\}^m \rightarrow \mathcal{X}$, the value $m$ is called the \emph{arity} of $\phi$, and it only applies to scopes of size $m$. For a constraint $c = \langle \sigma, \phi \rangle$ where $|\sigma| = m$, we denote this scope by $\sigma = \{v_{\sigma_1}, v_{\sigma_2}, \hdots, v_{\sigma_{m}} \} \subseteq V$. For example, if $\sigma = \{v_2, v_4, v_{10}\}$, then $|\sigma| = 3$, $\sigma_1 = 2$, $\sigma_2 = 4$, and $\sigma_{3} = 10$. The goal is to find an assignment $a \colon V \rightarrow \{0,1\}^n$ of variables to Boolean values to solve the following problem:
\begin{equation}
\label{eq:vcsp}
    \minimize_{a \colon V \rightarrow \{0,1\}^n} \quad \text{Cost}_{\mathcal{P}}(a) = \sum_{\langle \sigma,\phi\rangle \in C} \phi(a(v_{\sigma_1}),a(v_{\sigma_2}),\hdots,a(v_{\sigma_{|\sigma|}}))
\end{equation}
Prior work has focused on proving complexity results for this objective under different assumptions about the cost functions $\phi$. Formally, let $\Gamma$ represent a collection of cost functions. A \textit{Valued Boolean Constraint Language} VCSP($\Gamma$) is then a tuple $\langle V, \mathcal{X}, \Gamma, C \rangle$ where cost functions in the constraint set $C$ come from the collection $\Gamma$. The constraint language is called \emph{tractable} when all instances within VCSP($\Gamma$) can be solved in polynomial time. It is {NP-hard} (as an entire language) if an existing NP-hard problem has a polynomial-time reduction to VCSP($\Gamma$).

\paragraph{Relation to hypergraph $s$-$t$ cuts.} There is a close connection between the VCSP objective in~\eqref{eq:vcsp} and the generalized hypergraph $s$-$t$ cut problem in~\eqref{eq:hyperstcut}. The Boolean variables $V$ can be thought of as (non-terminal) nodes $\V - \{s,t\}$ in a hypergraph $\mathcal{H} = (\V, \E)$. Assigning a variable to 1 corresponds to placing a node on the $s$-side of a cut, while assigning to 0 means placing the node on the $t$-side. Each scope $\sigma$ of a constraint in the VCSP problem corresponds to the nodes in some hyperedge $e$, and the cost function $\phi$ corresponds to the splitting function $\vw_e$, which gives a penalty for each way of splitting the nodes. Proving tractability results for a cost function collection $\Gamma$ then corresponds to proving tractability results for a collection of hypergraph $s$-$t$ cut problems defined by a class of splitting functions (in our case, cardinality-based splitting functions for a specific choice of splitting penalties $\{w_1, w_2, \hdots, w_q\}$). This relationship allows us to translate existing tractability results from the VCSP literature to hypergraph cut problems, though a few additional details are needed to ensure the relationship is formalized correctly (see Section~\ref{sec:nphard}).

\paragraph{Complexity dichotomy results for Boolean VCSPs.}
Cohen et al.~\cite{cohen2004complete} proved complete complexity dichotomy results for Boolean VCSPs, which rely on proving certain inequalities involving the notion of a \emph{multimorphism}. To summarize these results, we must first extend a cost function $\phi \colon \{0,1\}^m \rightarrow \mathcal{X}$ of arity $m$ so that it can be applied to $m$ \emph{tuples} of Boolean values, rather than just $m$ Boolean values. Formally, if $\{t_1, t_2, \hdots, t_m\} \subseteq \{0,1\}^k$ is a set of $m$ Boolean vectors of size $k$, where $t_j[i]$ is the $i$th entry of the $j$th vector, we define
\begin{equation}
    \phi(t_1,t_2,...,t_m) = \sum_{i=1}^k \phi(t_1[i],t_2[i],...t_m[i]).
\end{equation}
In other words, evaluating $\phi$ on these tuples corresponds to evaluating them $k$ times (one for each position in the vectors), and then summing the results. A function $F \colon \{0,1\}^k \rightarrow \{0,1\}^k$ is defined to be a \emph{multimorphism} of $\phi$ if the following inequality holds:
\begin{equation}
\label{eq:multimorphism}
\phi(F(t_1),F(t_2),...,F(t_m)) \le \phi(t_1,t_2,...,t_m).
\end{equation}
If $\Gamma$ is a collection of cost functions, then we say $F$ is a multimorphism of $\Gamma$ if $F$ is a multimorphism of every $\phi \in \Gamma$. Cohen et al.~\cite{cohen2004complete} proved the following complexity dichotomy result for languages with finite-valued cost functions.
\begin{theorem}[Corollary 2~\cite{cohen2004complete}]
\label{thm:vcsp-trac}
If $\mathcal{X}$ involves only finite values, VCSP($\Gamma$) is tractable if $\Gamma$ has any of the following multimorphisms $\langle \textbf{0} \rangle , \langle \textbf{1} \rangle \text{ or } \langle \min,\max \rangle$; otherwise VCSP($\Gamma$) is NP-hard. 
\end{theorem}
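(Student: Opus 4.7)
The plan is to split the theorem into its tractability direction (three sufficient multimorphisms) and its hardness direction (everything else). For tractability, the two constant multimorphisms are essentially immediate. If $\langle \textbf{0} \rangle$ is a multimorphism of every $\phi \in \Gamma$, then specializing Inequality~\eqref{eq:multimorphism} with $F \equiv \textbf{0}$ and any tuples $t_1,\dots,t_m$ yields $\phi(\textbf{0},\dots,\textbf{0}) \le \phi(t_1,\dots,t_m)$, which implies that assigning every variable to $0$ is optimal for every constraint simultaneously; the case $\langle \textbf{1} \rangle$ is symmetric. The substantive tractable case is $\langle \min,\max \rangle$. Here I would first verify that $\langle \min,\max \rangle$ being a multimorphism of $\phi \colon \{0,1\}^m \to \mathcal{X}$ is, tuple-by-tuple, equivalent to the standard submodularity inequality $\phi(a \wedge b) + \phi(a \vee b) \le \phi(a) + \phi(b)$ from Equation~\eqref{eq:submodular} restricted to the Boolean cube. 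Once $\Gamma$ is submodular in this sense, the VCSP objective~\eqref{eq:vcsp} is a sum of submodular functions on $\{0,1\}^n$, hence itself submodular, and can be minimized in polynomial time via any polynomial-time submodular function minimization algorithm.

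For the hardness direction, the plan is to show that if none of the three multimorphisms is present for the whole of $\Gamma$, then $\Gamma$ is rich enough (through gadget constructions---variable identification, scope restriction, and non-negative combinations) to express a cost function that encodes an NP-hard optimization problem. A clean route is the algebraic one: the complexity of VCSP($\Gamma$) depends only on the lattice of multimorphisms (fractional polymorphisms) of $\Gamma$, and on the Boolean domain this lattice is well-understood. I would enumerate the maximal tractable multimorphism classes on $\{0,1\}$ and verify that they are precisely the three listed; every other case must then admit an explicit reduction. Concretely, I would proceed by cases: when $\langle \min,\max \rangle$ fails, some $\phi \in \Gamma$ is strictly non-submodular on some pair $(a,b)$, and I would use this to build a gadget simulating a ``cut penalty'' to reduce from \textsc{Max-Cut}; when in addition the constant multimorphisms fail, I would first express unary constraints pinning variables to $0$ or $1$ (this is standard in Boolean VCSP dichotomies and plays the role of terminals), then combine them with the non-submodular $\phi$ to reduce from \textsc{Max-Cut} or \textsc{Max 2-SAT}.

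The main obstacle is exactly this hardness direction: making precise that the three multimorphism classes exhaust all sources of polynomial-time solvability. Two points are delicate. First, cost functions in $\Gamma$ can have arbitrary finite arity and arbitrary finite cost range, so the gadget reductions must be uniform in these parameters; the algebraic viewpoint (fractional polymorphisms invariant under expressibility) is what makes this uniformity possible. Second, one must rule out that some additional subtle symmetry---beyond the three listed---could still yield tractability; on the Boolean domain this is handled by Post-style classification of clones combined with the Cohen--Cooper--Jeavons--Krokhin analysis of valued clones. The cleanest way to present the proof is therefore to (i) prove the tractable cases directly, (ii) invoke the algebraic reduction of VCSP complexity to multimorphisms, and (iii) check that each remaining clone on $\{0,1\}$ admits a concrete reduction from a canonical NP-hard Boolean optimization problem.
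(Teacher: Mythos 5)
The paper does not prove this statement at all: it is imported verbatim as Corollary~2 of Cohen et al.~\cite{cohen2004complete}, and the only thing the paper proves in its vicinity is Lemma~\ref{lem:submodular} (that $\langle \min,\max\rangle$ is a multimorphism of $\phi$ iff $\phi$ is submodular), offered as intuition. So the relevant question is whether your blind attempt actually establishes the dichotomy, and it does not. Your tractability direction is sound: the specialization of Inequality~\eqref{eq:multimorphism} at $k=1$ with $F\equiv\textbf{0}$ does show the all-zero assignment minimizes every constraint simultaneously, and the $\langle\min,\max\rangle$ case correctly reduces to submodular function minimization via the equivalence in Lemma~\ref{lem:submodular}. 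But the hardness direction, which is the entire substance of the theorem, remains a plan rather than a proof. The two steps you defer --- (i) that the complexity of VCSP($\Gamma$) is determined by its multimorphisms because multimorphisms are preserved under expressibility (the Galois connection for valued clones), and (ii) that the three listed multimorphisms are exactly the maximal tractable ones on the Boolean domain --- are the theorem. Writing ``I would enumerate the maximal tractable multimorphism classes and verify that they are precisely the three listed'' is a restatement of the claim, not an argument for it.

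A concrete point you gloss over: the general Boolean VCSP dichotomy of Cohen et al.\ (which also allows infinite costs) has \emph{more} than three tractable classes --- e.g.\ languages admitting a majority or minority multimorphism, or $\langle\max,\max\rangle$ / $\langle\min,\min\rangle$ --- and it is precisely the hypothesis that $\mathcal{X}$ contains only finite values that collapses the list to $\langle\textbf{0}\rangle$, $\langle\textbf{1}\rangle$, $\langle\min,\max\rangle$. Your sketch never uses finiteness of costs anywhere in the hardness direction, so as written it cannot be correct: an argument that ignores that hypothesis would ``prove'' NP-hardness for, say, a crisp language closed under minority, which is tractable. If the intent is to use this theorem the way the paper does, the right move is to cite it as a black box; if the intent is to reprove it, the case analysis over the remaining Boolean (fractional) polymorphism classes and the role of finite-valuedness must actually be carried out.
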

In this theorem, $\langle \textbf{0} \rangle$ represents the function that maps everything to 0, and $\langle \textbf{1} \rangle$ similarly maps everything to 1. The function $\langle \min,\max \rangle \colon \{0,1\}^2 \rightarrow \{0,1\}^2$ is defined by:
\begin{equation}
    \langle \min,\max \rangle (x_1, x_2) = (\min\{x_1, x_2\}, \max \{x_1, x_2\}).
\end{equation}
It is known that this function is a multimorphism of a cost function $\phi$ if and only if $\phi$ is submodular~\cite{cohen2003soft,cohen2004complete}. We provide a proof to give an intuition for multimorphisms.
\begin{lemma}
\label{lem:submodular}
    The cost function $\phi \colon \{0,1 \}^m \rightarrow \mathbb{R}^+$ is submodular if and only if it has $\langle \min,\max \rangle$ as a multimorphism.
\end{lemma}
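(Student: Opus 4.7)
The plan is to unpack the multimorphism inequality for $F = \langle \min, \max \rangle$ and show that, after a change of perspective, it is literally the submodularity inequality. First, I would identify each $a \in \{0,1\}^m$ with the subset $A = \{i : a[i] = 1\} \subseteq [m]$, so that componentwise minimum corresponds to intersection and componentwise maximum corresponds to union. Writing $a \wedge b$ and $a \vee b$ for these componentwise operations, the submodularity condition~\eqref{eq:submodular} is exactly
\begin{equation*}
\phi(a \wedge b) + \phi(a \vee b) \le \phi(a) + \phi(b) \quad \text{for all } a, b \in \{0,1\}^m.
\end{equation*}

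Next, I would carefully translate the multimorphism condition. Since $\langle \min, \max \rangle \colon \{0,1\}^2 \to \{0,1\}^2$ has $k=2$, the tuples $t_1, t_2, \hdots, t_m$ appearing in~\eqref{eq:multimorphism} are Boolean vectors of length $2$. Arranging them as the rows of an $m \times 2$ matrix and letting $a$ and $b$ denote the two columns (so $a[j] = t_j[1]$ and $b[j] = t_j[2]$), the extended cost function definition gives $\phi(t_1, \hdots, t_m) = \phi(a) + \phi(b)$. Since $F(t_j) = (\min(a[j],b[j]), \max(a[j],b[j]))$, the analogous computation yields $\phi(F(t_1), \hdots, F(t_m)) = \phi(a \wedge b) + \phi(a \vee b)$.

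Combining these two calculations, the multimorphism inequality $\phi(F(t_1), \hdots, F(t_m)) \le \phi(t_1, \hdots, t_m)$ is equivalent to $\phi(a \wedge b) + \phi(a \vee b) \le \phi(a) + \phi(b)$. Because the correspondence between tuple arrays $(t_1, \hdots, t_m)$ and pairs $(a,b) \in (\{0,1\}^m)^2$ is a bijection, quantifying over all tuples on one side is the same as quantifying over all pairs on the other, so the multimorphism property and submodularity are equivalent.

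The main obstacle is not mathematical depth but notational bookkeeping: the multimorphism definition evaluates an arity-$m$ function on $m$ tuples of length $k$ via a pointwise sum, and one must set up the right ``Boolean matrix'' viewpoint so that the two \emph{columns} of the input tuples become the actual arity-$m$ inputs to $\phi$. Once this viewpoint is in place, both directions of the ``if and only if'' follow from the same chain of algebraic identities, so no separate converse argument is required.
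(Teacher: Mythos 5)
Your proof is correct and follows essentially the same route as the paper's: both identify Boolean vectors with sets, stack $a$ and $b$ into $m$ length-2 tuples $t_j = (a[j], b[j])$, and observe that the extended evaluation of $\phi$ on these tuples and on their images under $\langle \min,\max\rangle$ turns the multimorphism inequality into exactly the submodularity inequality. Your version simply spells out the bookkeeping (the column-sum computation and the bijection between tuple arrays and pairs) a bit more explicitly than the paper does.
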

\begin{proof}
    The definition of submodularity given for set functions in Eq.~\eqref{eq:submodular} can be translated easily to a property of a Boolean function $\phi$. Formally, consider two $m$-tuples of binary values $\textbf{a} = (a_1, a_2, \hdots, a_m)$ and $\textbf{b} = (b_1, b_2, \hdots, b_m)$, which we can think of as indicator vectors for some sets $A$ and $B$. The indicator vectors for sets $A \cup B$ and $A \cap B$ are given by:
   \begin{align}
       \textbf{a}\cap\textbf{b} &= (\min \{a_1, b_1\}, \min \{a_2, b_2\}, \hdots ,\min \{a_m, b_m\})\\
       \textbf{a}\cup\textbf{b} &= (\max \{a_1, b_1\}, \max \{a_2, b_2\}, \hdots ,\max \{a_m, b_m\}).
   \end{align}
  By definition, $\phi$ is submodular if it satisfies the constraint
   \begin{equation} 
   \label{eq:booleansubmodularity}
   \phi(\textbf{a}\cap\textbf{b}) + \phi(\textbf{a}\cup\textbf{b}) \leq \phi(\textbf{a}) + \phi(\textbf{b})
   \end{equation}
   for an arbitrary pair of Boolean vectors \textbf{a} and \textbf{b}. 
   Define now a set of $m$ 2-tuples $\{t_1, t_2, \hdots, t_m\}$ by stacking \textbf{a} and \textbf{b}, so that $t_i = (\textbf{a}[i], \textbf{b}[i])$. The definition of a multimorphism in Eq.~\eqref{eq:multimorphism} applied to $\langle \min, \max \rangle$ exactly corresponds to the inequality defining submodularity in Eq.~\eqref{eq:booleansubmodularity}.
\end{proof}

	\section{NP-hardness for Non-submodular \cbcut{}}
\label{sec:nphard}
We now prove NP-hardness for all cardinality-based $s$-$t$ cut problems with non-submodular parameters and $w_1 = 1$. Recall that when $w_1 = 0$ the problem is trivial. As a warm-up, and to set the stage for some of our approximation hardness results in Section~\ref{sec:approxhard}, we show NP-hardness for $4$-\cbcut{} whenever $w_2 \notin [1,2]$ via direct reduction from maximum cut (\textsc{MaxCut}). We then leverage the connection to VCSPs to prove hardness for arbitrary-sized hyperedges. The main result of this section is to confirm that submodularity not only coincides exactly with graph-reducibility (as shown by Veldt et al.~\cite{veldt2022hypergraph}), but also coincides with tractability.
\begin{theorem}
\label{thm:nphard}
    If $w_1 = 1$, \cbw{} is tractable if and only if the splitting function is submodular. 
\end{theorem}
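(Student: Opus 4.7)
The ``if'' direction is immediate from Veldt et al.~\cite{veldt2022hypergraph}: any submodular cardinality-based splitting function yields a cut problem that reduces in polynomial time to directed graph $s$-$t$ cut. So the plan is to establish the ``only if'' direction, namely that whenever $w_1 = 1$ and $\{1, w_2, \ldots, w_q\}$ violates one of the inequalities of Eq.~\eqref{eq:submodular}, \cbw{} is NP-hard. As a warm-up, and to reuse the gadgets for the approximation-hardness results of Section~\ref{sec:approxhard}, I plan to first handle $r = 4$ directly via reduction from \textsc{MaxCut}, separately for the two non-submodular regimes $w_2 > 2$ and $w_2 < 1$; each regime admits a natural single-hyperedge gadget whose cheapest split encodes the cut value of an auxiliary graph, and this concrete argument foreshadows the LP-based hardness we will need later.

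For arbitrary $r$, the plan is to recast \cbw{} as a Boolean VCSP and invoke Theorem~\ref{thm:vcsp-trac}. The cardinality-based splitting function defines a single $r$-ary cost function $\phi \colon \{0,1\}^r \to \mathbb{R}^+$ with $\phi(a_1,\ldots,a_r) = w_{\min\{|a|,\, r - |a|\}}$ where $|a| = \sum_j a_j$, while the terminal conditions $s \in S$, $t \in \bar S$ are encoded by two unary ``pinning'' cost functions $\chi_1$ and $\chi_0$ that assign cost $0$ for the desired value and a finite penalty $M$ for the opposite value. Let $\Gamma = \{\phi, \chi_0, \chi_1\}$ and choose $M$ polynomial in the input size, strictly larger than $w_q$ times the number of hyperedges, so that any optimal VCSP assignment respects the pinning. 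I then plan to verify that \cbw{} and VCSP($\Gamma$) are polynomial-time inter-reducible: an \cbw{} instance maps to VCSP($\Gamma$) by placing one $\phi$-constraint per hyperedge and a pinning constraint each on $s$ and $t$; conversely, a VCSP($\Gamma$) instance becomes an \cbw{} instance by identifying all variables pinned to $1$ with a single source and all variables pinned to $0$ with a single sink.

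Given this equivalence, Theorem~\ref{thm:vcsp-trac} says VCSP($\Gamma$) is tractable iff $\Gamma$ admits one of $\langle \mathbf{0}\rangle$, $\langle \mathbf{1}\rangle$, or $\langle \min, \max\rangle$ as a multimorphism, and is NP-hard otherwise. The pinning constraints rule out the first two options: $\chi_1(\mathbf{0}) = M > 0 = \chi_1(\mathbf{1})$ shows $\langle \mathbf{0}\rangle$ is not a multimorphism of $\chi_1$, and a symmetric failure eliminates $\langle \mathbf{1}\rangle$ via $\chi_0$. The remaining option $\langle \min,\max\rangle$ is trivially a multimorphism of $\chi_0$ and $\chi_1$, and by Lemma~\ref{lem:submodular} is a multimorphism of $\phi$ iff $\phi$ is submodular. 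Combined with the characterization in Eq.~\eqref{eq:submodular}, this is equivalent to submodularity of the original cardinality-based splitting function $\vw_e$, chaining the equivalences to yield the theorem.

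The main obstacle I anticipate is making the reverse reduction (from VCSP($\Gamma$) to \cbw{}) fully precise within the framework of~\cite{cohen2004complete} so that NP-hardness transfers rather than merely being stated. Two technical points need care: (i) the pinning penalty $M$ must be finite so that Theorem~\ref{thm:vcsp-trac}'s finite-valued hypothesis applies, yet large enough to behave as a hard constraint; choosing $M$ polynomial in the instance size and in $w_q$ handles both requirements, and avoids the polynomially-bounded-penalty assumption required by~\cite{adriaens2024improved} since the penalty values $\{w_2,\ldots,w_q\}$ themselves are fixed constants of the problem. (ii) Identifying many pinned variables with a single terminal can produce hyperedges whose scope contains $s$ or $t$ more than once; this is handled by a routine rewriting that pads such scopes with fresh auxiliary nodes and adjusts the cardinality bookkeeping. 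Once these details are verified, the theorem follows cleanly from the VCSP dichotomy together with Lemma~\ref{lem:submodular}.
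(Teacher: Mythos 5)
Your overall strategy---recast \cbw{} as a Boolean VCSP, invoke the dichotomy of Theorem~\ref{thm:vcsp-trac}, and use Lemma~\ref{lem:submodular} to identify $\langle \min,\max\rangle$-multimorphisms with submodularity---is exactly the paper's strategy, and your multimorphism analysis (the constant maps fail, $\langle\min,\max\rangle$ survives iff $\phi$ is submodular) is sound. The gap is in how you encode the terminal conditions. You put soft unary pins $\chi_0,\chi_1$ with penalty $M$ into the language, but $M$ cannot be ``polynomial in the input size'': the dichotomy theorem classifies a \emph{fixed} language $\Gamma$, chosen before any instance is seen, so an instance-dependent $M$ makes ``VCSP($\Gamma$)'' a moving target to which Theorem~\ref{thm:vcsp-trac} does not apply. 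If instead you fix $M$ once and for all, the direction of reduction you actually need for hardness---from arbitrary VCSP($\Gamma$) instances \emph{to} \cbw{}---breaks: the hard instances produced by the dichotomy may pay the pin penalty (pins are soft), may pin a variable both ways, or may apply several pins to one variable, and identifying all pinned-to-$1$ variables with a single $s$ silently converts soft constraints into hard ones, changing the optimum. Your proposed fix for repeated terminals in a scope---padding with fresh auxiliary nodes---also does not do what you claim: a fresh node in a hyperedge gets placed wherever is cheapest, so the gadget realizes $\min_a \phi_r(1,1,x,a)$ rather than the restriction $\phi_r(1,1,x)$, and in the non-submodular regime (e.g.\ $w_2<w_1$) these differ.

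The paper avoids all of this by hard-wiring the terminals into the cost functions themselves: the language $\Gamma_{r,w_1,\hdots,w_q}=\{\phi_r,\phi_s,\phi_t,\phi_{st}\}$ contains the restrictions of $\phi_r$ with one or two arguments fixed to $1$ or $0$, so constraints of type $\phi_s$ correspond bijectively to hyperedges containing $s$, and the equivalence of Lemma~\ref{lem:equivalence} is exact in both directions with no penalties, no variable identification, and no scope-multiplicity issues. Note also that the constant multimorphisms are then ruled out not by auxiliary pins but by the restricted functions themselves (e.g.\ $\phi_{st}(\mathbf{0})=w_1>0$ while $\phi_{st}$ attains $0$ is impossible, forcing $w_1=0$ for $\langle\mathbf{0}\rangle$ to survive), which is precisely where the degenerate case $w_1=0$ is excised. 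To repair your argument, replace the pinned language with these restricted cost functions; as written, the reduction in the direction you need does not go through.
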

In other words, the only non-submodular case where the problem is tractable is \textsc{Degenerate} \cbcut{} (where $w_1 = 0$).

\subsection{Direct reductions for $4$-\cbcut{}}
\label{sec:4cbcut}
\begin{figure}
    \centering
    \begin{minipage}{0.48\textwidth}
        \centering
        \includegraphics[width=1\textwidth]{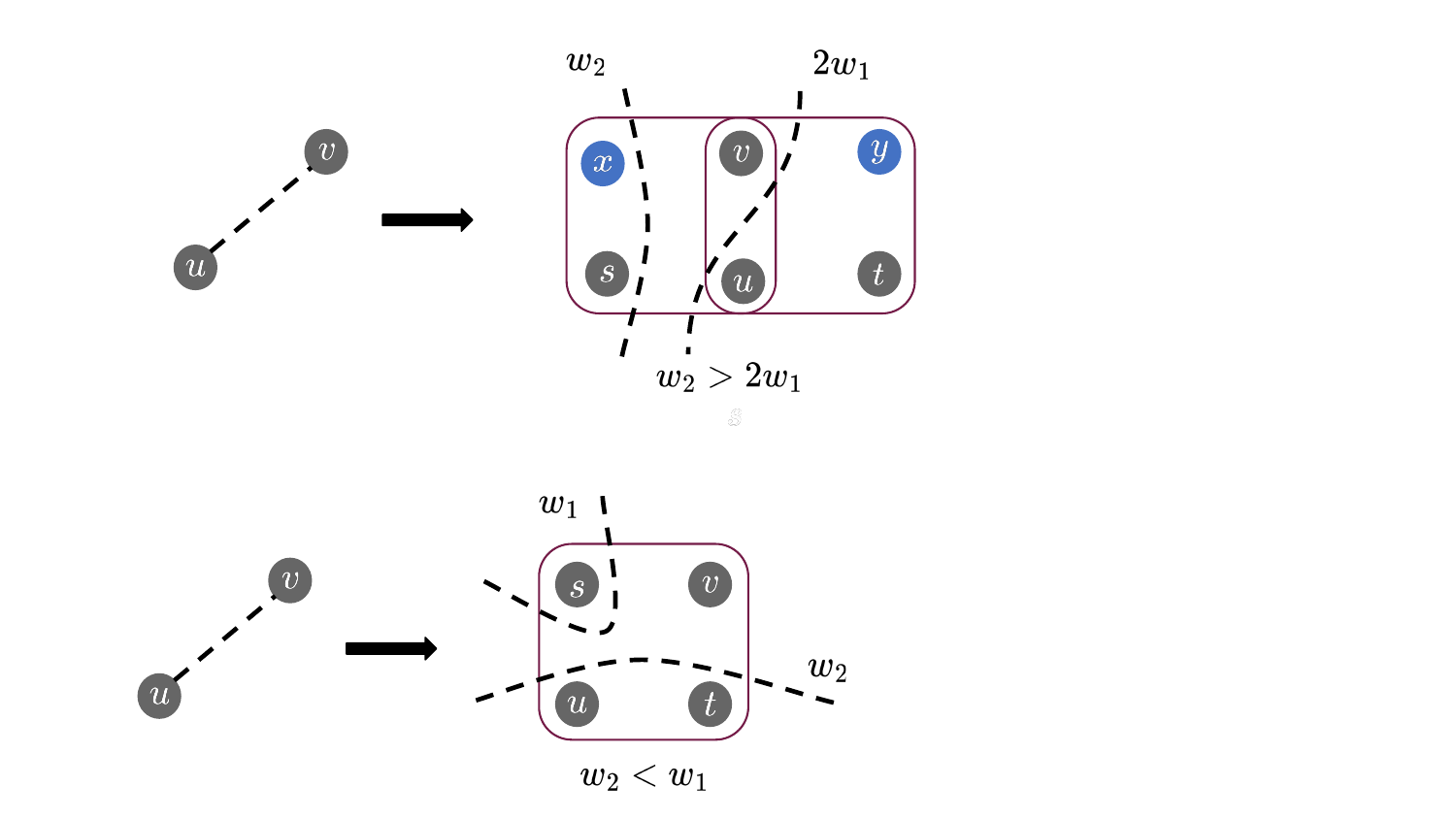} 
    \end{minipage}\hfill
    \begin{minipage}{0.48\textwidth}
        \centering
        \includegraphics[width=1\textwidth]{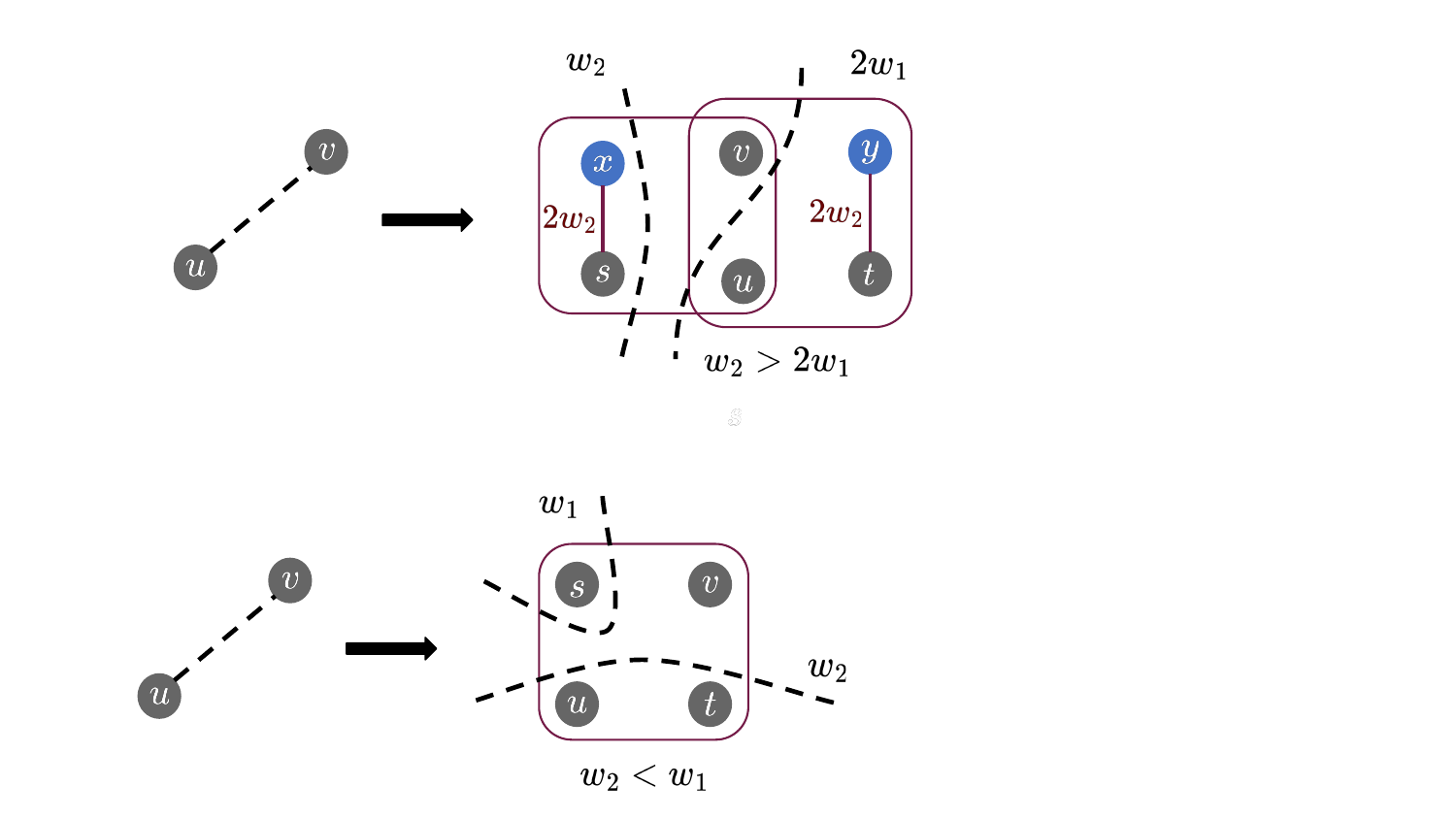} 
    \end{minipage}
    \caption{Gadgets for reducing \textsc{MaxCut} to 4-\cbcut{} when $w_2 < 1$ (left, first shown in~\cite{veldt2022hypergraph}) and $w_2 > 2$ (right).}
    \label{fig:4cbcut}
\end{figure}
For $4$-\cbcut{}, there are only two ways to penalize a split hyperedge. The penalty is $w_1 = 1$ for a (1,3)-split, and is $w_2$ for a (2,2)-split. The submodular region is given by $w_2 \in [1,2]$. Let $G = (V,E)$ be an undirected unweighted graph representing an instance of \textsc{MaxCut}, a well-known NP-hard problem. The goal of \textsc{MaxCut} is to separate $V$ into two groups in order to maximize the number of edges crossing between groups. When $w_2 < 1$, Veldt et al.~\cite{veldt2022hypergraph} showed a simple reduction from \textsc{MaxCut} to $4$-\cbcut{} by introducing source and sink nodes $s$ and $t$ and replacing each edge $(u,v)$ with a hyperedge $e_{uv} = (u,v,s,t)$ (see Figure~\ref{fig:4cbcut}, left panel). Every hyperedge must be cut since it involves both $s$ and $t$. If $u$ and $v$ are separated, the penalty at $e_{uv}$ is 1, otherwise it is $w_2 < 1$. Since it is cheaper to split hyperedges evenly, the $4$-\cbcut{} problem is equivalent to cutting as many hyperedges evenly as possible, which is the same as maximizing the number of cut edges in the \textsc{MaxCut} instance.

We can obtain a similar reduction for $w_2 > 2$ to $4$-\cbcut{} \textsc{with edges} by replacing each edge in $G$ with a slightly more involved gadget. Replace $(u,v) \in E$ with two hyperedges $(u,v,s,x)$ and $(u,v,t,y)$, and two edges $(s,x)$ and $(t,y)$ of weight $2w_2$, which is large enough that they would never be cut by an optimal solution (see Figure~\ref{fig:4cbcut}, right panel). If $u$ and $v$ are together (either on the $s$-side or $t$-side), the penalty would be $w_2$, otherwise the penalty would be $2w_1 = 2 < w_2$. Again, it is cheaper to separate $u$ and $v$, so the hypergraph $s$-$t$ cut problem becomes equivalent to \textsc{MaxCut} at optimality. 

Our new gadget for $w_2 > 2$ is closely related to the gadget developed independently and concurrently by Adriaens et al.~\cite{adriaens2024improved}. Both our reduction from \textsc{MaxCut} and their reduction from \textsc{MaxCut} work for (unweighted) $4$-\cbcut{} only if $w_2$ is polynomially bounded in the hypergraph size. Adriaens et al.~\cite{adriaens2024improved} also showed a reduction from 3SAT to prove NP-hardness for $4$-\cbcut{} when $w_2 > 2$ without needing this polynomially-bounded assumption. We will next show a reduction from a certain class of VCSPs that proves NP-hardness for arbitrary-sized hyperedges, that do not require costs to be polynomially bounded.

\subsection{Hardness for arbitrary hyperedge sizes via VCSPs}
We now introduce a specific Valued Boolean Constraint Language that we will show exactly corresponds to $r$-\cbcut{}. Fix a value of $r$ and parameters $\{w_1, w_2, \hdots, w_q\}$. Define $\Gamma_{r,w_1,\hdots,w_q}$ to be the constraint language that includes exactly four cost functions $\{\phi_r,\phi_s,\phi_t,\phi_{st}\}$.
%
%
We first define the cost function $\phi_r \colon \{0,1\}^r \rightarrow \{w_1, w_2, \hdots, w_q\}$ by 
\begin{align}
    \phi_r(x_1,x_2,\hdots, x_r) = 
    \begin{cases}
        w_j & \text{if } \sum_{i = 1}^r x_i \in \{0,1, \hdots, q\} \\
        w_{r-j} & \text{if } \sum_{i = 1}^r x_i \in \{q+1,q+2, \hdots, r \}.
    \end{cases}
 \end{align}
 This corresponds to a cardinality-based splitting function with parameters $\{w_1, w_2, \hdots, w_q\}$, applied to a hyperedge with $r$ non-terminal nodes. The other three cost functions $\{\phi_s, \phi_t, \phi_{st}\}$ correspond to the same splitting function applied to hyperedges containing $s$ but not $t$, containing $t$ but not $s$, or containing both $s$ and $t$, respectively. We think of $x_i = 1$ as assigning a variable to the $s$-side of the cut and $x_i = 0$ as assigning a variable to the $t$-side. These three cost functions are therefore defined by applying $\phi_r$ with one or two input variables fixed to 0 or 1:
\begin{align*}
\phi_s(x_1, x_2, \hdots, x_{r-1}) &= \phi_r(1,x_1, x_2, \hdots, x_{r-1}) \\
\phi_t(x_1, x_2, \hdots, x_{r-1}) &= \phi_r(x_1, x_2, \hdots, x_{r-1}, 0) \\
\phi_{st}(x_1, x_2, \hdots, x_{r-2}) &= \phi_r(1,x_1, x_2, \hdots, x_{r-2}, 0)
\end{align*}

\begin{lemma}
\label{lem:equivalence}
VCSP($\Gamma_{r,w_1,\hdots,w_q}$) is tractable if and only if \cbw{} is tractable.
\end{lemma}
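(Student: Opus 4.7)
The plan is to establish two polynomial-time reductions between CBcut and VCSP($\Gamma_{r,w_1,\ldots,w_q}$) that preserve objective values exactly. The language $\Gamma_{r,w_1,\ldots,w_q}$ was designed precisely to mirror the four possible types of hyperedges in a CBcut instance (neither terminal, just $s$, just $t$, or both terminals), so both reductions should be transparent; the main conceptual point is to justify that the VCSP variant has been set up correctly to handle the $s\in S$, $t\in \bar S$ constraint implicitly.

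For the direction CBcut tractable $\Rightarrow$ VCSP tractable, I would take an arbitrary instance $\mathcal{P}=\langle V,C,\mathcal{X}\rangle$ of VCSP($\Gamma_{r,w_1,\ldots,w_q}$) and build a hypergraph $\mathcal{H}=(\mathcal{V},\mathcal{E})$ with node set $\mathcal{V} = V\cup\{s,t\}$ (adding two fresh terminal nodes). For each constraint $\langle\sigma,\phi\rangle\in C$ I would add one size-$r$ hyperedge: scope $\sigma$ itself if $\phi=\phi_r$; $\sigma\cup\{s\}$ if $\phi=\phi_s$; $\sigma\cup\{t\}$ if $\phi=\phi_t$; and $\sigma\cup\{s,t\}$ if $\phi=\phi_{st}$. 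Then every assignment $a\colon V\to\{0,1\}^n$ corresponds bijectively to an $s$-$t$ cut $S=\{v\in V: a(v)=1\}\cup\{s\}$, and by the construction of $\phi_s,\phi_t,\phi_{st}$ by partial evaluation of $\phi_r$, the per-hyperedge penalty $\vw_e(e\cap S)$ equals the constraint cost $\phi(a(v_{\sigma_1}),\ldots,a(v_{\sigma_{|\sigma|}}))$ exactly. Summing yields $\cut_{\mathcal{H}}(S)=\text{Cost}_{\mathcal{P}}(a)$, and solving CBcut gives the optimal assignment.

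For the reverse direction, given a CBcut instance $(\mathcal{H},s,t)$, I would construct a VCSP over variables $V=\mathcal{V}\setminus\{s,t\}$ and create one constraint per hyperedge $e\in\mathcal{E}$, choosing the cost function from $\{\phi_r,\phi_s,\phi_t,\phi_{st}\}$ according to whether $s$ and/or $t$ lie in $e$, with scope $e\setminus\{s,t\}$. The same bijection as above, using $S=\{v:a(v)=1\}\cup\{s\}$, matches costs term-by-term; crucially the requirement $s\in S$ and $t\in\bar S$ is automatically satisfied because $s$ and $t$ are absent from $V$ and hence cannot be assigned values. Both reductions are linear in the instance size, so any polynomial-time algorithm for one problem yields a polynomial-time algorithm for the other.

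The main obstacle is really bookkeeping: one must carefully verify that the cases in the definition of $\phi_r$ (based on whether $\sum x_i \le q$ or $>q$) produce the correct $w_i$ for each partition of a hyperedge, and that the partial evaluations defining $\phi_s$, $\phi_t$, $\phi_{st}$ faithfully capture the cut cost of hyperedges containing one or both terminals. Beyond this, the only subtlety is that the VCSP side places no constraint on $s$ and $t$ (since they are not variables at all), which is what cleanly encodes the $s$-$t$ separation requirement on the cut side. No auxiliary gadgets or approximation arguments are needed, and no assumption on the magnitudes of the $w_i$ is required.
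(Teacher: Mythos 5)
Your proposal is correct and follows essentially the same approach as the paper: variables for the non-terminal nodes, one constraint per hyperedge with the cost function chosen from $\{\phi_r,\phi_s,\phi_t,\phi_{st}\}$ according to which terminals the hyperedge contains, and an exact value-preserving correspondence between assignments and $s$-$t$ cuts. The only difference is that you spell out both directions explicitly, whereas the paper presents one and notes the other is symmetric.
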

\begin{proof}
    The reduction in both directions is straightforward; we show one direction for clarity. Let $\mathcal{H} = (\V,\E)$ denote an $r$-uniform hypergraph with node set $\V = \{b_0 = s, b_1, b_2, \hdots, b_n, b_{n+1} = t\}$. For a hyperedge $e$, let $e(i)$ denote the index of the $i$th node in $e$, so that the hyperedge can be expressed as $e = (b_{e(1)}, b_{e(2)}, \hdots, b_{e(r)}) \subseteq \V$. If $e$ contains $s$, we assume nodes in $e$ are ordered so that $b_{e(1)} = s$. If $e$ contains $t$, we assume $b_{e(r)} = t$. The ordering for non-terminal nodes is arbitrary.  

    To reduce this instance of \cbw{} to an instance of VCSP($\Gamma_{r,w_1,\hdots,w_q}$), we introduce a set of $n$ variables $V = \{v_1, v_2, \hdots, v_n\}$. For a hyperedge $e$ not containing terminal nodes, we add a constraint $\langle \langle v_{e(1)}, v_{e(2)}, \hdots, v_{e(r)} \rangle, \phi_r \rangle$. If $e$ contains $s$ but not $t$, we add constraint $\langle \langle v_{e(2)}, \hdots, v_{e(r)} \rangle, \phi_s \rangle$ (since $b_{e(1)} = s$). If it contains $t$ but not $s$, we add constraint $\langle \langle v_{e(1)}, v_{e(2)}, \hdots, v_{e(r-1)} \rangle, \phi_t \rangle$ (since $b_{e(r)} = t$). If it contains both $s$ and $t$, we add constraint $\langle \langle v_{e(2)}, v_{e(3)}, \hdots, v_{e(r-1)} \rangle, \phi_{st} \rangle$. It is straightforward to check that there is a variable assignment with cost $\alpha \geq 0$ for the VCSP instance if and only if there is an $s$-$t$ cut with cut value $\alpha$ in the hypergraph $\mathcal{H}$. The reduction from VCSP($\Gamma_{r,w_1,\hdots,w_q}$) to \cbw{} is similar. 
\end{proof}

\paragraph{Complexity dichotomy results for $r$-\cbcut{}}
At this point we simply need to interpret Theorem~\ref{thm:vcsp-trac} to determine complexity dichotomy results from the constraint language $\Gamma_{r,w_1,\hdots,w_q}$, which in turn gives complexity dichotomy results for \cbw{}. Note that all of the cost functions $\{\phi_r, \phi_s, \phi_t, \phi_{st}\}$ are submodular if and only if $\phi_r$ is submodular, which is true if and only if splitting parameters $\{w_1, w_2, \hdots, w_r\}$ satisfy the submodularity inequalities in Eq.~\eqref{eq:submodular}. We know from Lemma~\ref{lem:submodular} that $\langle \min, \max \rangle$ is a multimorphism of $\phi_r$ if and only if $\phi_r$ is submodular. This corresponds to the known tractable submodular regime for $r$-\cbcut{}. Theorem~\ref{thm:vcsp-trac} tells us that the only other situation where VCSP($\Gamma_{r,w_1,\hdots,w_q}$) is tractable is when $\langle \textbf{0} \rangle$ or $\langle \textbf{1} \rangle$ is a multimorphism of $\Gamma_{r,w_1,\hdots,w_q}$.
Observe that $F = \langle \textbf{0} \rangle$ is a multimorphism of $\Gamma_{r,w_1,\hdots,w_q}$ if and only if all four of the following inequalities hold for all choices of $\{x_1, x_2, \hdots, x_r\}$:
\begin{align*}
     \phi_r(F(x_1),F(x_2),\hdots,F(x_r)) &= \phi_r(0,0,\hdots,0)
      = 0 \le \phi_r(x_1,x_2,\hdots,x_r) \\
     \phi_s(F(x_1),F(x_2),\hdots,F(x_{r-1})) & = \phi_s(0,0,\hdots,0)
      = \phi_r(1, 0,0,\hdots,0) = w_1 \le \phi_s(x_1,x_2,\hdots,x_{r-1})\\
     \phi_t(F(x_1),F(x_2),\hdots,F(x_{r-1})) & = \phi_t(0,0,\hdots,0)
      = \phi_r(0,0,\hdots,0) = 0 \le \phi_t(x_1,x_2,\hdots,x_{r-1}) \\
     \phi_{st}(F(x_1),F(x_2),\hdots,F(x_{r-2})) & = \phi_{st}(0,0,\hdots,0)
      = \phi_r(1, 0,0,\hdots,0) = w_1 \le \phi_{st}(x_1,x_2,\hdots,x_{r-2})
 \end{align*}
The first and third inequalities are always true, but the second and fourth are true for all inputs if and only if $w_1 = 0.$ We can similarly show that $\langle \textbf{1} \rangle$ is a multimorphism of $\Gamma_{r,w_1,\hdots,w_q}$ if and only if $w_1 = 0.$ Thus, aside from the submodular regime, the only tractable case for \cbw{} is \textsc{Degenerate} $r$-\cbcut{}. This proves Theorem~\ref{thm:nphard}.

        \section{Approximating Non-submodular $r$-\cbcut{}}
Having established complexity dichotomy results for $r$-\cbcut{}, we would like to determine the best approximation guarantees we can achieve for the NP-hard non-submodular cases. We will specifically design approximation algorithms that rely on projecting a set of non-submodular splitting penalties (i.e., values $\{w_1 = 1, w_2, \hdots, w_q\}$ that \emph{do not} satisfy inequalities in~\ref{eq:submodular}) to a nearby set of submodular penalties (``nearby'' values $\{\hat{w}_1, \hat{w}_2, \hdots, \hat{w}_q\}$  that do satisfy these inequalities). We can then solve the latter submodular hypergraph $s$-$t$ cut problem to provide an approximation for the original non-submodular problem. 

\textbf{Known approximations for $4$-\cbcut{}.} Veldt et al.~\cite{veldt2022hypergraph} previously applied this approach to obtain simple approximation guarantees for $r$-\cbcut{} for $r \in \{4,5\}$ in the non-submodular region, i.e., when $w_1 = 1$ and $w_2 \notin [1,2]$. Finding a cardinality-based splitting function that is ``closest'' to a non-submodular splitting function is particularly easy in this case since it just involves either decreasing or increasing $w_2$, depending on whether it lies to the right or left of the submodular region $[1,2]$; see Figure~\ref{fig:4cb-projection}. To approximate the problem when $w_2 < 1$, one can compute the solution to the $s$-$t$ cut problem with splitting penalty $\hat{w}_2 = 1$. It is not hard to show this produces a $\frac{1}{w_2}$-approximate solution for the original NP-hard problem. Similarly, if $w_2 > 2$, one can solve the submodular problem where $\hat{w}_2 = 2$ to obtain a $w_2/2$ approximation.

\begin{figure}
    \centering
    \subfigure[$r = 4 \text{ or } 5$]{%
    \centering
        \includegraphics[width=0.35\linewidth]{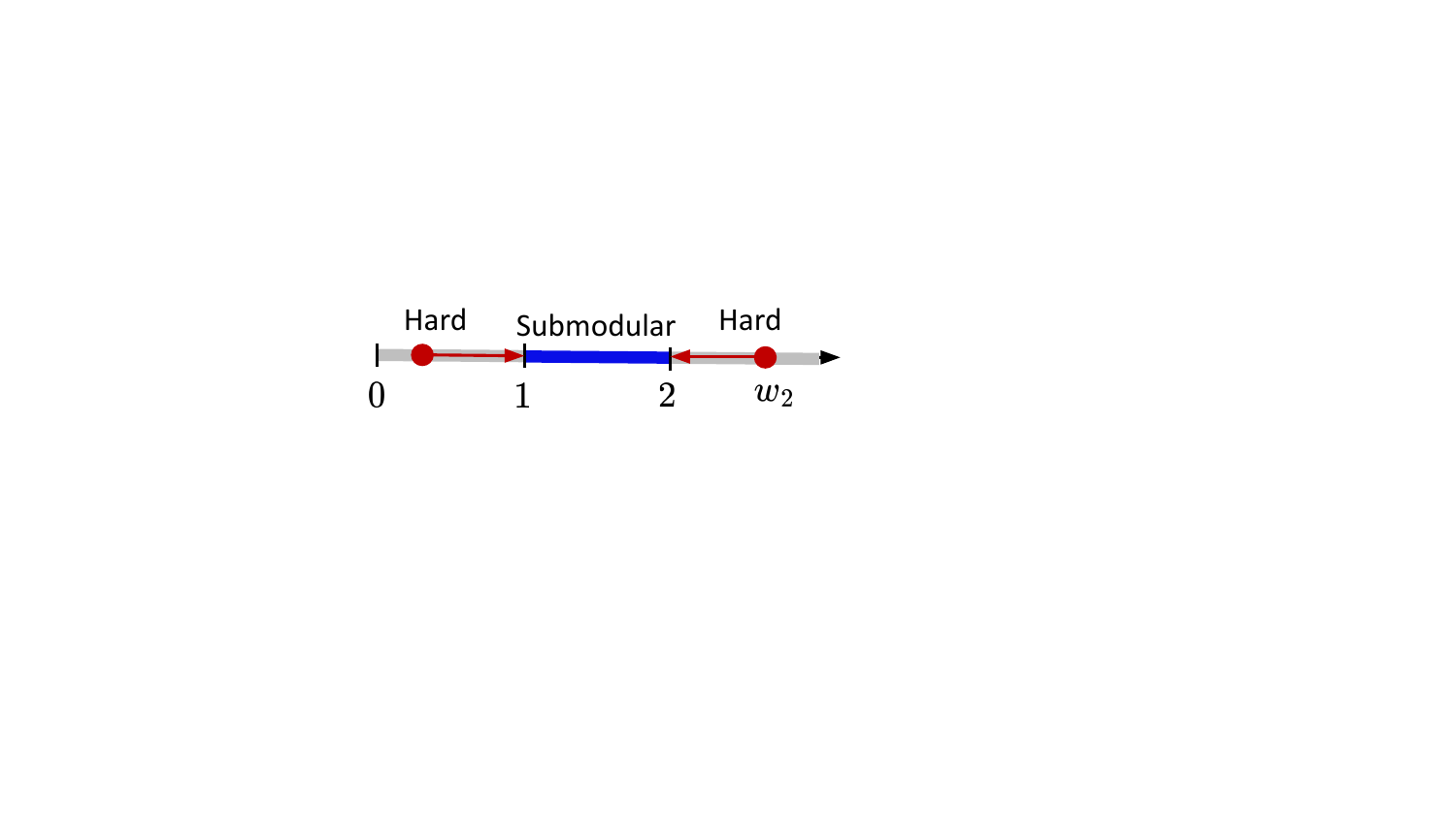}
        \label{fig:4cb-projection}
    }
    \hfill
    \subfigure[$r = 6 \text{ or } 7$]{%
    \centering
        \includegraphics[width=0.35\linewidth]{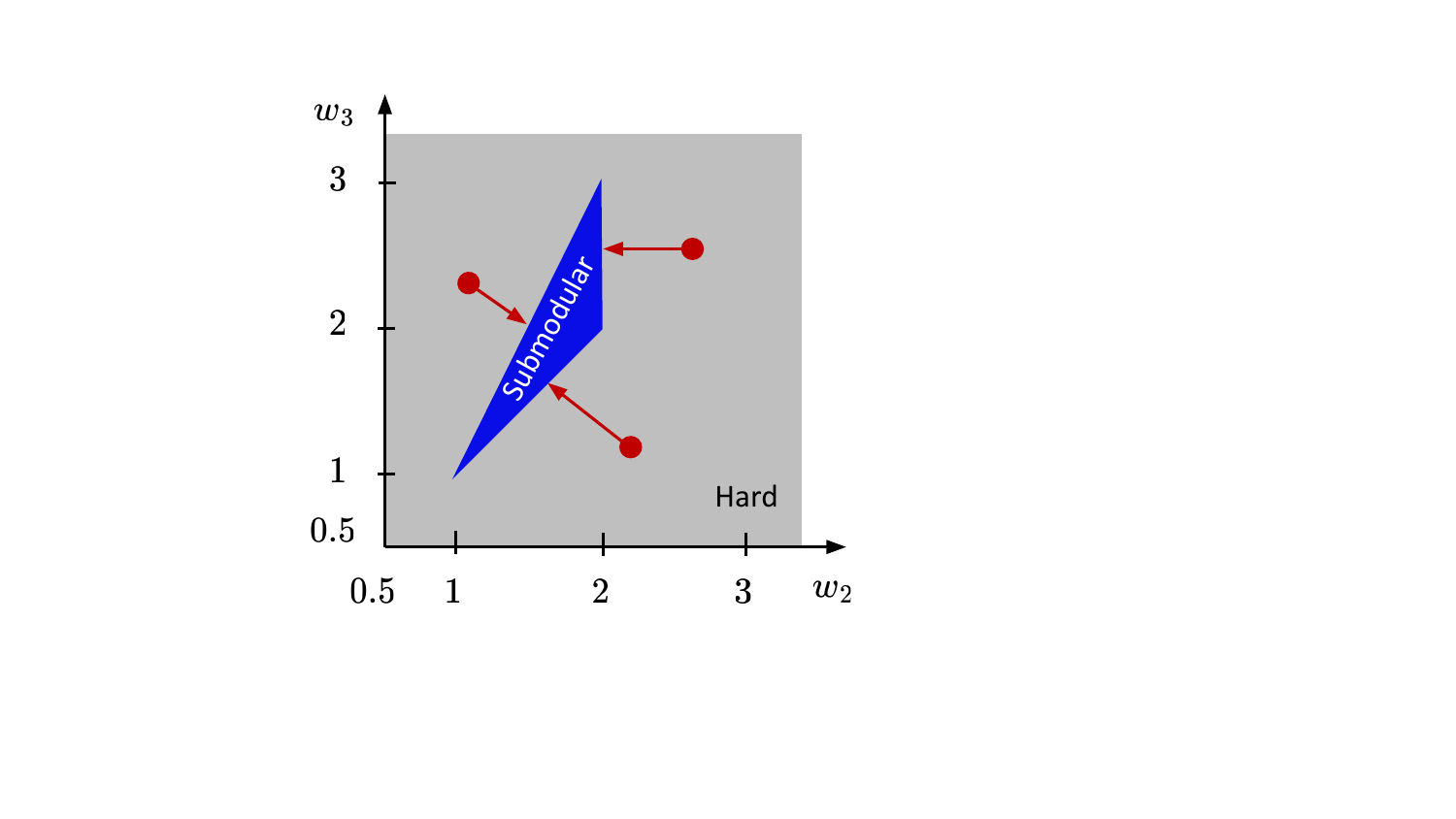} %
        \label{fig:67cb-projection}
    }
    \caption{Submodular (shaded in blue) and NP-hard (shaded in gray) regions of $r$-\cbcut{} for different values of $r$ and  fixed $w_1 = 1$.}
    \label{fig:}
\end{figure}

\textbf{Projecting splitting penalties for $r>5$.}
Projecting non-submodular splitting penalties to the submodular region becomes more nuanced when the hyperedge size is $r > 5$ (see Figure~\ref{fig:67cb-projection}). This typically involves changing multiple splitting penalties (not just $w_2$), and it is not clear at prior what it means to find the ``best'' or ``closest'' set of submodular splitting penalties for a given set of non-submodular splitting penalties. The submodularity inequalities in~\eqref{eq:submodular} define a closed convex set, so a natural idea is to project a given set of non-submodular penalties $\{w_1 = 1, w_2, \hdots, w_q\}$ to this convex set in a way that minimizes a 2-norm distance or the distance in terms of some other norm. There are many well-known methods for projecting a point onto a convex set that could be used for this approach. However, will this provide the best possible approximation guarantee for a non-submodular hypergraph $s$-$t$ cut problem? In order to answer this question, we will first show how to characterize the approximation guarantee for any technique that projects non-submodular penalties to the submodular region. We will then design a new approach based on a piecewise-linear function approximation problem, that is guaranteed to provide the best approximation among all methods that project non-submodular points to the submodular region.
 
\subsection{Approximation bounds for non-submodular $r$-\cbcut{}}
We will associate each $r$-\cbcut{} problem with a vector of splitting penalties $\vw = [w_1,w_2,\hdots,w_q]^{\top} \in \mathbb{R}^q_+$ which we call the ``splitting vector.'' The space $\mathbb{R}_+^q$ represents the universe of all potential $r$-\cbcut{} splitting vectors. Within this space, let $\mathcal{S}_q \subseteq \mathbb{R}_+^q $ represent the subset of vectors that correspond to submodular cardinality-based splitting functions. We will refer to these as $r$-SCB (Submodular and Cardinality-Based) vectors. Vectors associated with non-submodular splitting functions form the set $\{\vw \colon \vw \in \mathbb{R}_+^q\setminus\mathcal{S}_q\}$ of $r$-NCB (Non-submodular Cardinality-Based) vectors. We wish to design a method to project each $r$-NCB vector $\vw \notin \mathcal{S}_q$ to an $r$-SCB vector $\hat{\vw} = [\hat{w}_1,\hat{w}_2,\hdots,\hat{w}_q]^{\top}$, in a way that provides the best approximation guarantee for $r$-\cbcut{}. Let $\cut_\mathcal{H}$ denote the cut function for non-submodular splitting vector $\vw$ and $\hat{\cut}_\mathcal{H}$ be the cut function for $\hat{\vw}$. Formally, our goal is to choose $\hat{\vw} \in \mathcal{S}_q$ so that for every $r$-uniform hypergraph $\mathcal{H} = (\V, \E)$ and every $S \subseteq \V$ we have
\begin{align}
    \cut_{\mathcal{H}}(S) \le \hat{\textbf{cut}}_{\mathcal{H}}(S) \le \rho_r\cdot\cut_{\mathcal{H}}(S) \label{eq:max_approx}
\end{align}
for the smallest possible value of $\rho_r$. Note that satisfying the first inequality $\cut_{\mathcal{H}}(S) \le \hat{\textbf{cut}}_{\mathcal{H}}(S)$ requires us to enforce the constraint $\hat{\vw} \ge \vw$. We can enforce this without loss of generality; if the projected vector had a parameter $\hat{w}_j < w_j$ for some $j \in \{1,2,\hdots,q\}$, we could scale the entire vector $\hat{\vw}$  by a factor of $\vw_j/\hat{\vw}_j$ without affecting optimal solutions or approximation guarantees. For convenience, we will assume throughout this section that $\vw(1) = w_1 = 1$ for the original vector $\vw$ that we wish to project to the submodular region. Note that because of the constraint $\hat{\vw} \geq \vw$, it is possible for $\hat{w}_1 > 1$ to hold. 

The following lemma shows how well we can approximate $r$-\cbcut{} for a splitting vector $\vw$ by instead solving a nearby problem defined by splitting vector $\hat{\vw}$. The result holds for an arbitrary pair of splitting vectors satisfying $\vw \leq \hat{\vw}$. In principle, the idea is to apply this to project $\vw \notin \mathcal{S}_q$ to some nearby point $\hat{\vw} \in \mathcal{S}_q$.
\begin{lemma}
\label{lem:projbound}
    Let $\vw, \hat{\vw} \in \mathbb{R}^q_+$ be a pair of splitting vectors satisfying $\vw \leq \hat{\vw}$. For every $r$-uniform hypergraph $\mathcal{H}=(\V,\E)$ and $S \subseteq \V$, the inequalities in~\eqref{eq:max_approx} are satisfied for
    \begin{align*}
        \rho_r = \max_i \frac{\hat{w}_i}{w_i}  \quad  \text{ for }  i \in \{1,2,3,\hdots,q\}.
    \end{align*}
\end{lemma}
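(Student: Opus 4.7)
The statement is essentially a pointwise comparison between two cardinality-based cut functions on the same hypergraph, so the proof should be a short direct calculation using the formula
\[
    \cut_{\mathcal{H}}(S) = \sum_{i=1}^{q} w_i \cdot |\partial S_i|, \qquad \hat{\cut}_{\mathcal{H}}(S) = \sum_{i=1}^{q} \hat{w}_i \cdot |\partial S_i|,
\]
where the counts $|\partial S_i|$ of hyperedges with small-side $i$ depend only on $\mathcal{H}$ and $S$, not on the splitting parameters. My plan is to simply bound the summands on the right-hand expression against those on the left, once term-by-term from below (to get $\cut_{\mathcal{H}}(S) \le \hat{\cut}_{\mathcal{H}}(S)$) and once term-by-term from above (to get the factor-$\rho_r$ upper bound).

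For the first inequality I would note that the hypothesis $\vw \le \hat{\vw}$ means $w_i \le \hat{w}_i$ for every $i \in \{1,2,\hdots,q\}$, and since each count $|\partial S_i|$ is a nonnegative integer, summing $w_i \cdot |\partial S_i| \le \hat{w}_i \cdot |\partial S_i|$ over $i$ gives $\cut_{\mathcal{H}}(S) \le \hat{\cut}_{\mathcal{H}}(S)$ directly.

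For the second inequality I would use the definition of $\rho_r$ to write $\hat{w}_i \le \rho_r \, w_i$ for every $i$. Multiplying by $|\partial S_i| \ge 0$ and summing then gives
\[
    \hat{\cut}_{\mathcal{H}}(S) = \sum_{i=1}^{q} \hat{w}_i \cdot |\partial S_i| \le \sum_{i=1}^{q} \rho_r \cdot w_i \cdot |\partial S_i| = \rho_r \cdot \cut_{\mathcal{H}}(S),
\]
which is the desired bound.

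The only subtlety I expect is that the ratio $\hat{w}_i / w_i$ defining $\rho_r$ requires $w_i > 0$. This is safe for $i=1$ by our standing assumption $w_1 = 1$, but if $w_j = 0$ for some $j > 1$, the projection constraint $\hat{\vw} \ge \vw$ together with $w_j = 0$ means we would need $\hat{w}_j = 0$ as well for the bound to make sense. I would briefly note this at the start of the argument: since we focus on non-degenerate inputs with $w_1 = 1$, one can implicitly restrict the maximum defining $\rho_r$ to indices with $w_i > 0$, and any index $i$ with $w_i = 0$ forces $\hat{w}_i = 0$ (otherwise no finite approximation factor would be possible), making the corresponding term vanish on both sides. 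This is really the only wrinkle in an otherwise one-line computation.
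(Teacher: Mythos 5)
Your proof is correct and follows essentially the same route as the paper's: both arguments reduce to the observation that the ratio $\hat{\cut}_{\mathcal{H}}(S)/\cut_{\mathcal{H}}(S)$ of two nonnegative weighted sums over the same counts $|\partial S_i|$ is bounded by the largest coordinate-wise ratio $\hat{w}_i/w_i$. Your extra remark about indices with $w_i = 0$ is a reasonable clarification the paper leaves implicit, but it does not change the argument.
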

\begin{proof}
    For an arbitrary set $S \subseteq \V$, the cut penalty with respect to $\vw$ is $\textbf{cut}_{\mathcal{H}}(S)$ and the cut value with respect to $\hat{\vw}$ is $\hat{\textbf{cut}}_\mathcal{H}(S)$. Thus:
    \begin{align*}
        \frac{\hat{\textbf{cut}}_\mathcal{H}(S)}{\textbf{cut}_{\mathcal{H}}(S)}
        &= \frac{\hat{w}_1\cdot|\partial S_1| + \hat{w}_2\cdot |\partial S_2| +, \hdots,+ \hat{w}_q\cdot |\partial S_q|}{w_1\cdot |\partial S_1| + w_2\cdot|\partial S_2| +, \hdots,+ w_q\cdot |\partial S_q| } \le \max_{i \in \{1,\hdots, q\}} \left( \frac{\hat{w}_i}{w_i} \right).
    \end{align*}
\end{proof}
Observe that the approximation bound of $\max_i \frac{\hat{w}_i}{w_i}$ is tight. Recall that our goal is to project non-submodular splitting penalties in such a way that we can guarantee a certain approximation factor for all instances of \cbw{}. This means that we need the inequality in~\eqref{eq:max_approx} to hold for every possible $r$-uniform hypergraph $\mathcal{H}$ and node set $S$. In the worst case, it is possible to construct a hypergraph such that the approximation factor in Lemma~\eqref{lem:projbound} is tight. More precisely, if the largest ratio is $\hat{w}_t/w_t$ for some $t \in \{1, 2, \hdots, q \}$, we can find a hypergraph with node $S$ where $\partial S$ only contains cut hyperedges with exactly $t$ nodes on the small side of the cut. This results in an approximation factor of exactly $\hat{w}_t/w_t$.

\subsection{Norm-minimizing projection techniques}
\begin{figure}[t]
    \centering
    \subfigure[Bounds obtained via $\ell_1$-norm projection]{%
        \includegraphics[width=0.45\textwidth]{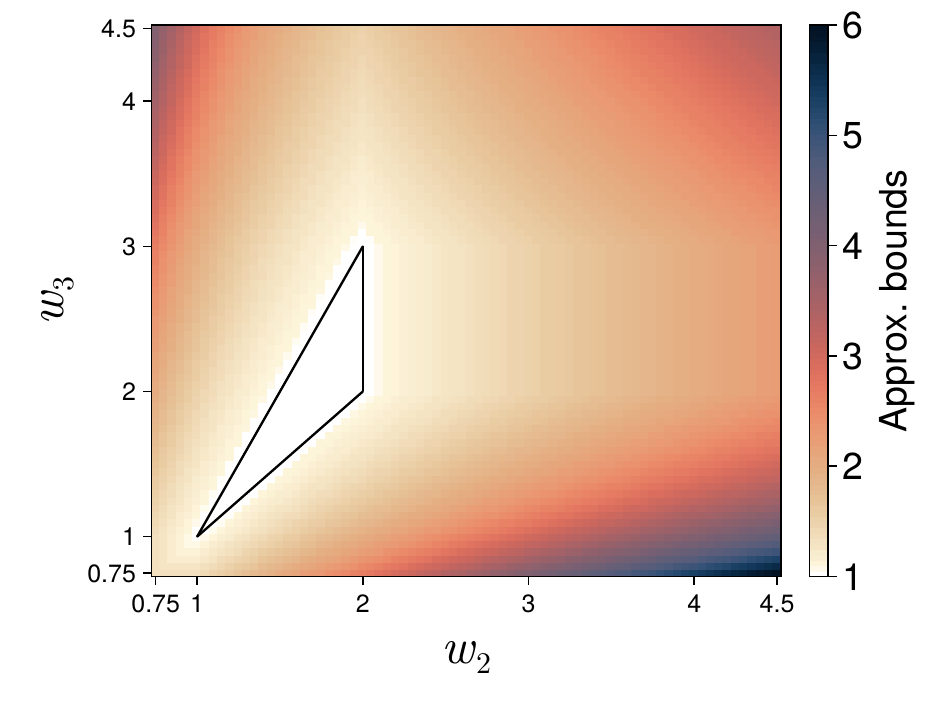}
        \label{fig:heatmap_L1}
    }
    \hfill
    \subfigure[Bounds obtained via $\ell_2$-norm projection]{%
        \includegraphics[width=0.45\textwidth]{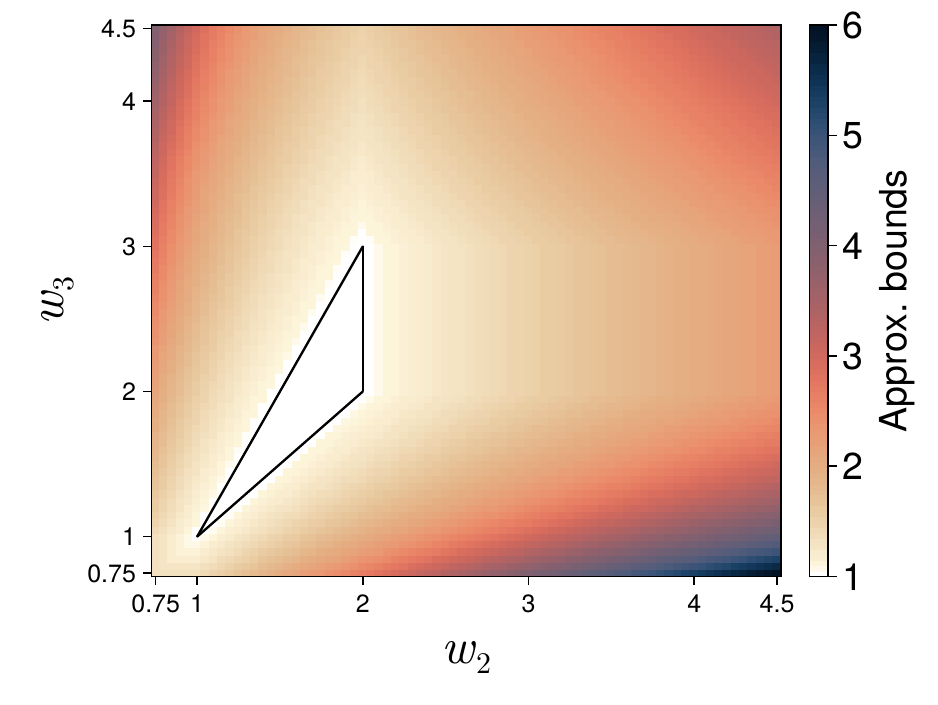} %
        \label{fig:heatmap_L2}
    }
    \hfill
    \subfigure[Bounds obtained via $\ell_\infty$-norm projection]{%
        \includegraphics[width=0.45\textwidth]{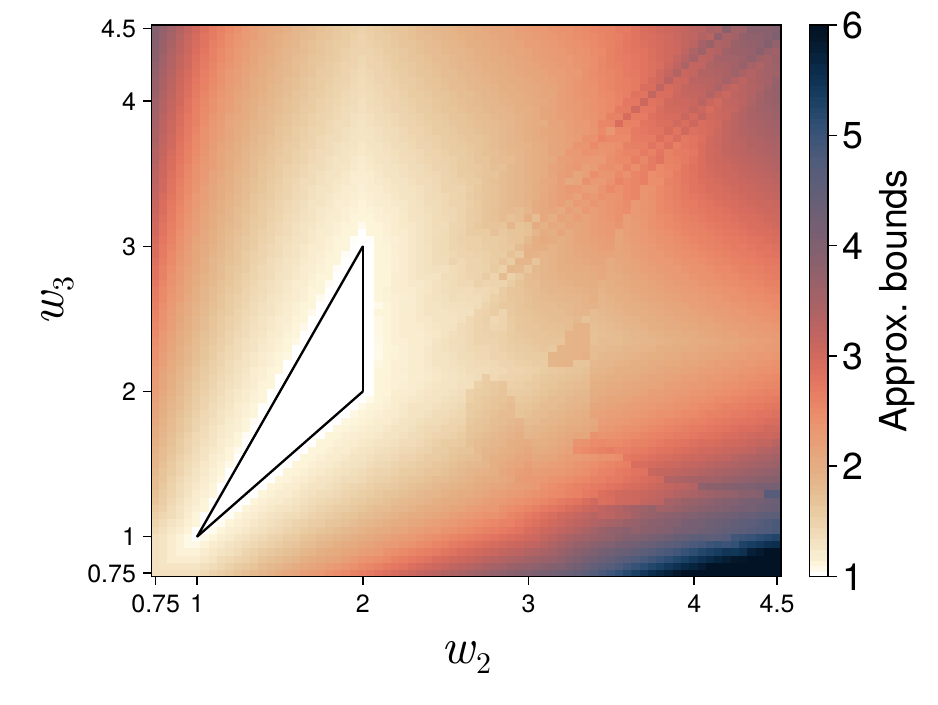} %
        \label{fig:heatmap_Linf}
    }
    \hfill
     \subfigure[Bounds obtained by our new Algorithm~\ref{alg:algo_projection}]{%
        \includegraphics[width=0.45\textwidth]{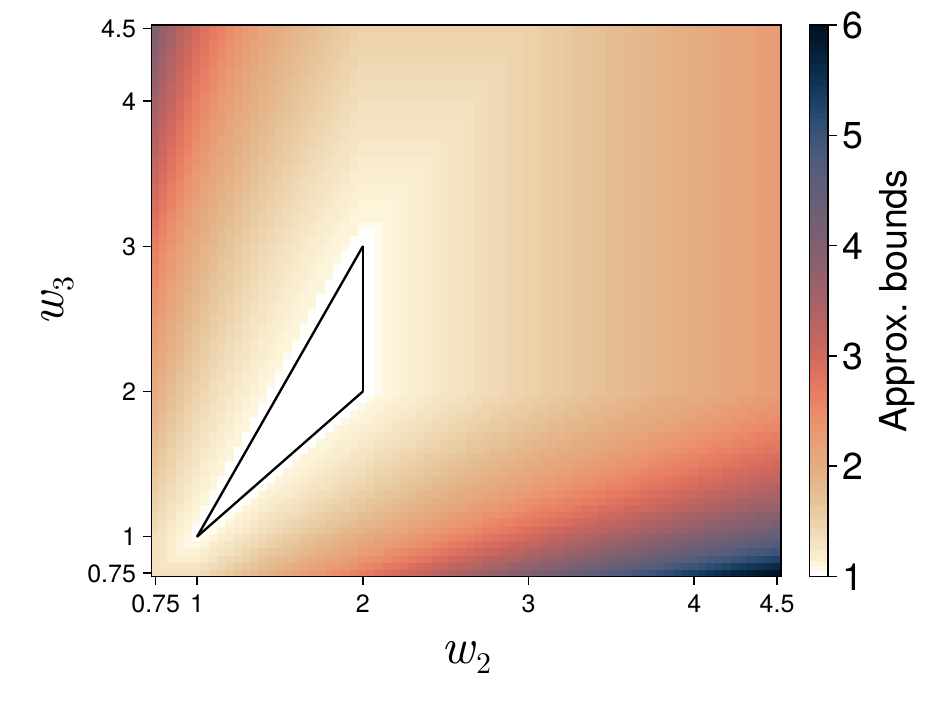} %
        \label{fig:heatmap_plcc}
    }
    \caption{Heatmaps of approximation guarantees obtained for $r$-\cbcut{} when $r \in \{6,7\}$ for a grid of $(w_2,w_3)$ choices when $w_1=1$ is fixed, using four different techniques for projecting on the submodular region. Within the submodular region (shown by a white triangle), the approximation ratio is $1$ as no projection is required. }
    \label{fig:heatmaps_norms}
\end{figure} 
Projecting an arbitrary point onto a convex region is a standard problem in computational geometry. We can use the approximation ratio in Lemma~\ref{lem:projbound} to illustrate how well these standard projection techniques allow us to approximate $r$-\cbcut{}. In more detail, we can solve a minimum-norm projection problem of the form
\begin{equation}
    \label{eq:standardproj}
    \min \quad \| \vw - \vw' \| \quad \text{ subject to } \hat{\vw}' \in \mathcal{S}_q
\end{equation}
for some choice of norm $\| \cdot \|.$ The resulting vector may not satisfy $\vw' \geq \vw,$ so we scale it to define a vector $\hat{\vw} = c \vw'$ where $c$ is as small as possible while still satisfying $\hat{\vw} \geq \vw.$ The approximation factor is then given by $\max_i \hat{w}_i/w_i.$
As an illustration, we consider $r \in \{6, 7\}$ and assess the effectiveness of finding a minimum-norm projection of an $r$-NCB vector onto the submodular space $\mathcal{S}_3$. We specifically consider the $\ell_1$-norm, $\ell_2$-norm, and $\ell_\infty$-norm, as common examples of norms one might wish to minimize. For simplicity, we set $w_1=1$ so that the problem reduces to projecting a two-dimensional point $[w_2, w_3]$ into the submodular region defined by inequalities $w_2 \leq w_3$, $1 \leq w_2 \leq 2$, and $2w_2 \geq w_3 + 1$ (blue region in Figure~\ref{fig:67cb-projection}). It is not hard to show that scaling and projecting in a 2-dimensional space in this way does not affect the resulting approximation guarantee.

The best approximation factors for a grid of $(w_2, w_3)$ points is shown using heatmaps in Figures~\ref{fig:heatmap_L1},~\ref{fig:heatmap_L2}, and~\ref{fig:heatmap_Linf}. The darkness of a point indicates how large the worst-case approximation ratio $\rho_r = \max \left(\frac{w_1}{w_1},\frac{w_2}{w_2}\right)$ is for a given point $(w_2, w_3)$. Not surprisingly, the approximation factor gets worse as we move farther from the submodular region, for all three norms. The approximation ratios achieved using $\ell_1$ and $\ell_2$ are very similar, though at close inspection, the $\ell_1$ result is always at least as good as using $\ell_2$, and can be strictly better. The $\ell_\infty$ approximation is never better than $\ell_1$ and $\ell_2$, but can be noticeably worse. Figures~\ref{fig:heatmap_L1_L2} through \ref{fig:heatmap_Linf_L2} display the difference in approximation ratios achieved when using the three different norms. This shows that $\ell_1$ projections produce the best approximation bounds among them, followed by $\ell_2$, and then $\ell_{\infty}$. However, none of these minimum-norm projection approaches produces the best approximation factor. In Figure~\ref{fig:heatmap_plcc} we show the approximation factors achieved by a new projection technique we design, which we will prove provides the optimal approximation factor that can be achieved for $r$-\cbcut{} by replacing a set of non-submodular splitting penalties with a nearby set of submodular splitting penalties. Figures~\ref{fig:heatmap_L1_Lin}, \ref{fig:heatmap_L2_Lin}, and \ref{fig:heatmap_Linf_Lin} show the difference in approximation factor between each norm-minimizing projection and our new approach. Each norm-minimizing technique produces the same result as our method in \emph{some} regions, but there are also always regions in which they fail to find the best approximation factor. 

\begin{figure}
    \centering
    \subfigure[Difference between $\ell_2$ and $\ell_1$ projections ]{
        \includegraphics[width=0.45\textwidth]{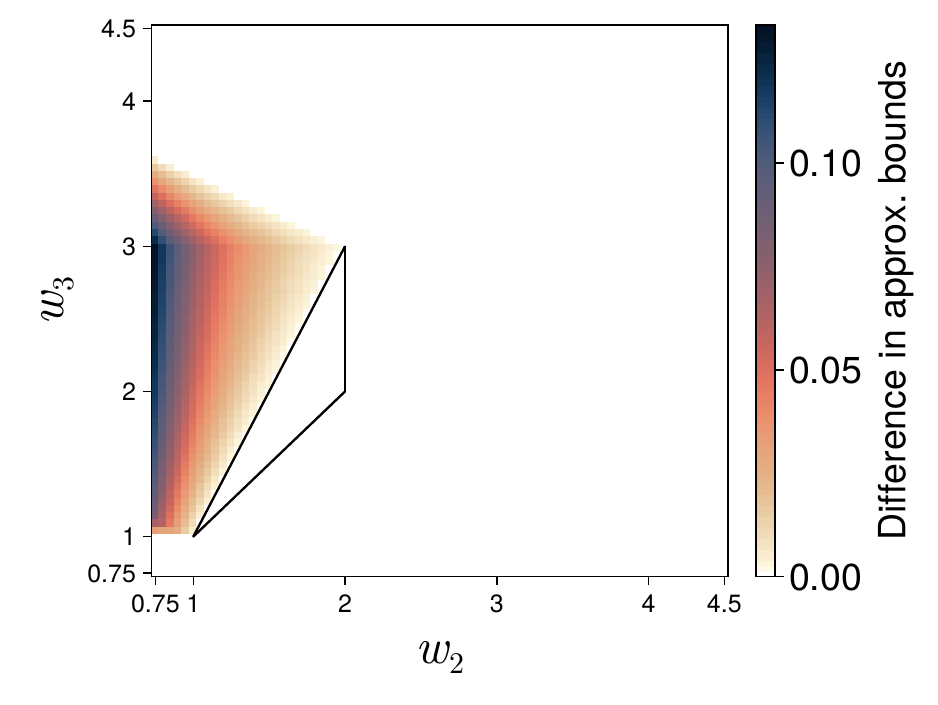}
        \label{fig:heatmap_L1_L2}
    }
    \hfill
    \subfigure[Difference between $\ell_{\infty}$ and $\ell_1$ projections]{
        \includegraphics[width=0.45\textwidth]{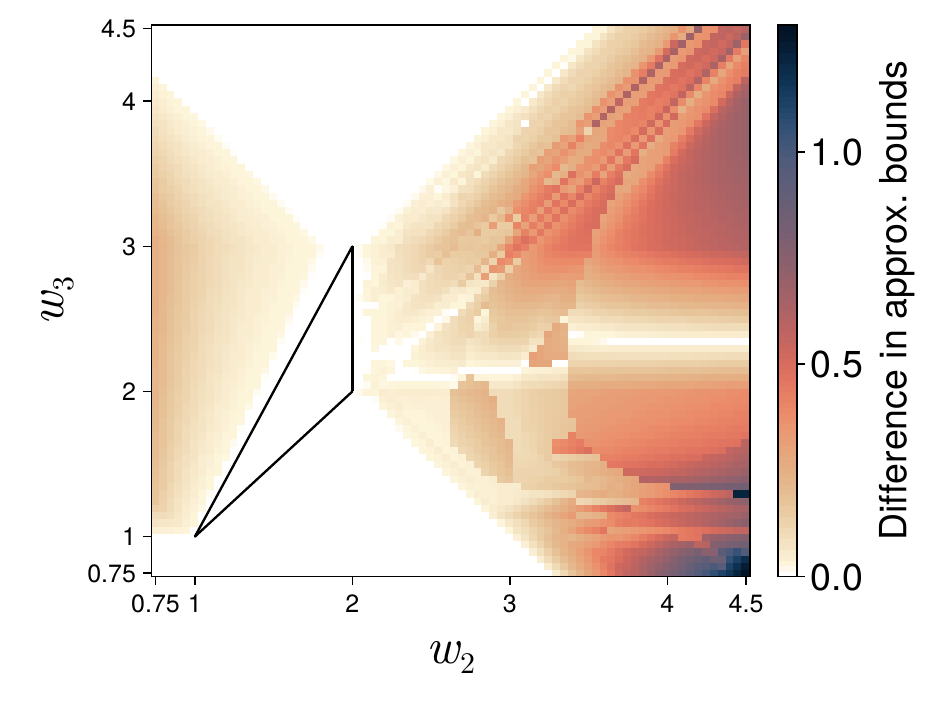} %
        \label{fig:heatmap_L1_Linf}
    }
    \hfill
    \subfigure[Difference between $\ell_{\infty}$ and $\ell_2$ projections]{%
        \includegraphics[width=0.45\textwidth]{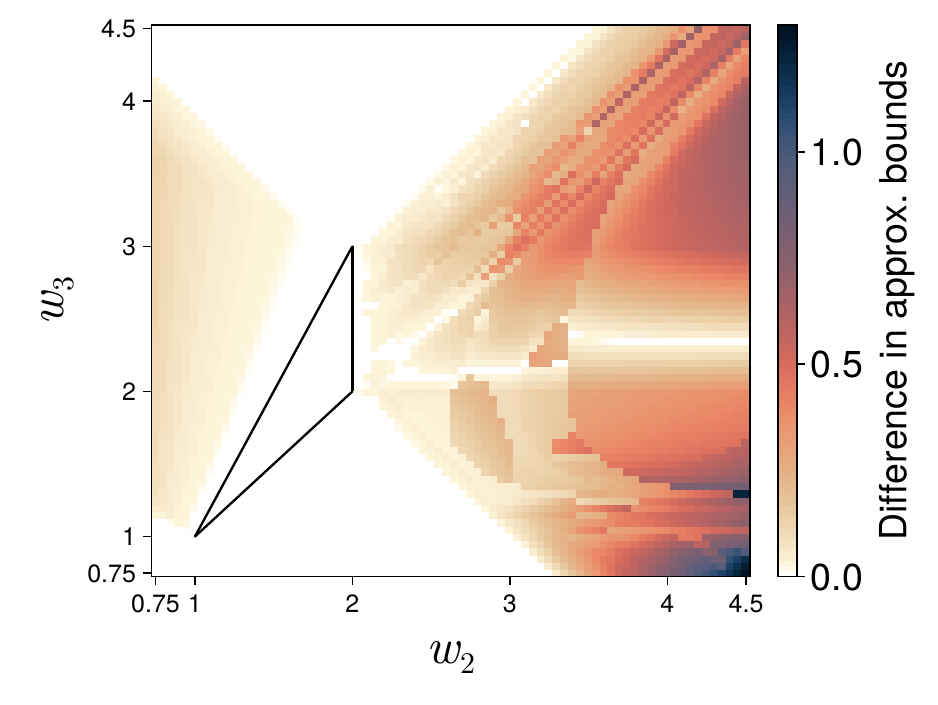} %
        \label{fig:heatmap_Linf_L2}
    }
    \hfill
    \subfigure[Difference between $\ell_1$ projection and Algorithm~\ref{alg:algo_projection}]{%
        \includegraphics[width=0.45\textwidth]{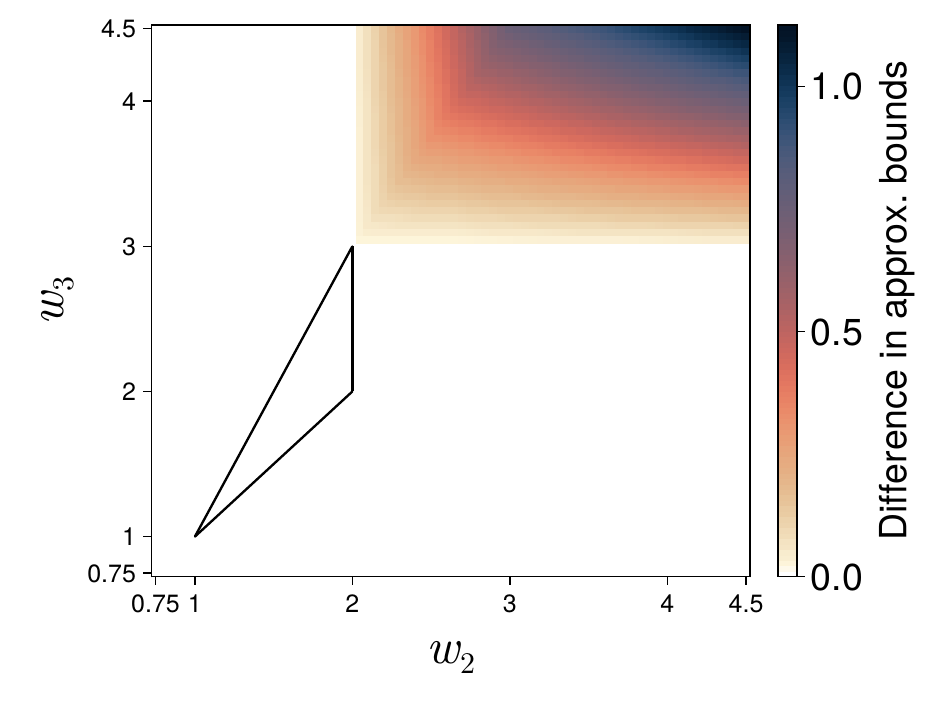} %
        \label{fig:heatmap_L1_Lin}
    }
    \hfill
     \subfigure[Difference between $\ell_2$ projection and Algorithm~\ref{alg:algo_projection}]{%
        \includegraphics[width=0.45\textwidth]{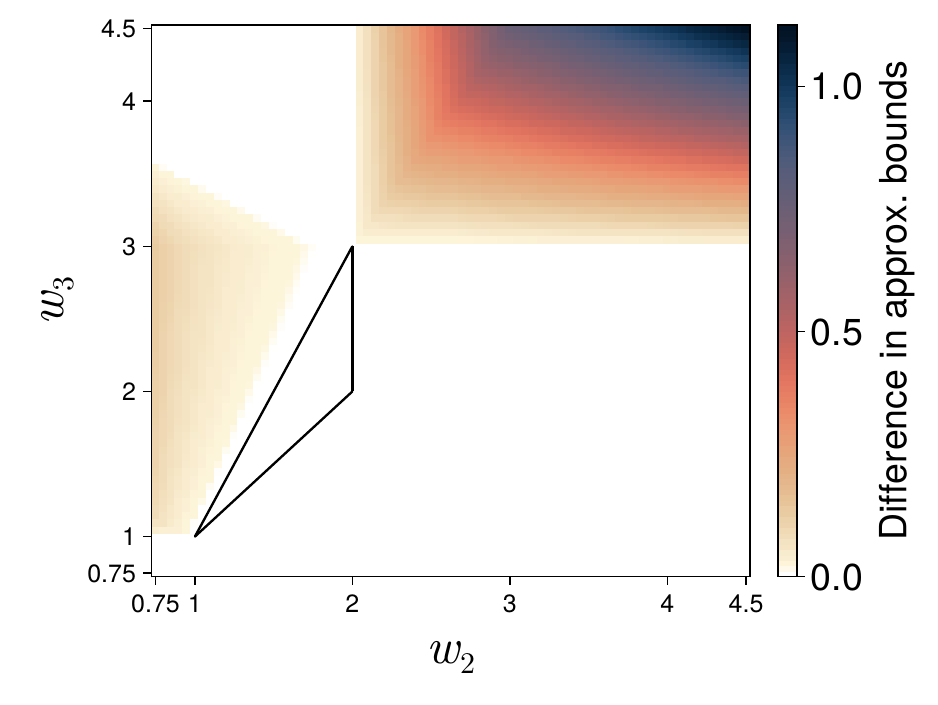} %
        \label{fig:heatmap_L2_Lin}
    }
    \hfill
     \subfigure[Difference between $\ell_{\infty}$ projection and Algorithm~\ref{alg:algo_projection}]{%
        \includegraphics[width=0.45\textwidth]{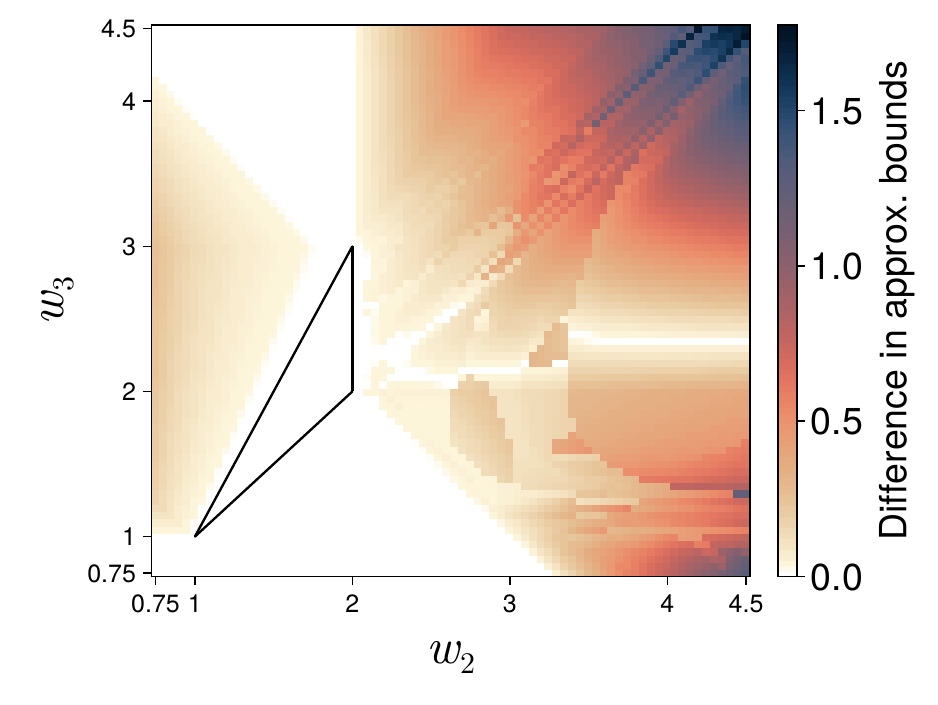} %
        \label{fig:heatmap_Linf_Lin}
    }
    \caption{Heatmaps showing the difference in approximation factors obtained using different types of projection techniques.}
    \label{fig:heatmaps_diffs}
\end{figure}

\subsection{Approximation using a piecewise-linear concave upper bound}
We now present full details for our optimal projection method, whose approximation factors are illustrated for $r \in \{6,7\}$ in Figure~\ref{fig:heatmap_plcc}. For convenience, and using a slight abuse of notation, we think of the $r$-NCB splitting vector as a discrete integer function $\vw \colon \{0,1,2,\hdots, q\} \rightarrow \mathbb{R}_+$, with range equal to the set of splitting penalties $\vw(i)=w_i$ for $i\in\{0,1,2,\hdots,q\}$. Note that when treating $\vw$ as an integer function we will explicitly encode the fact that $\vw(0) = 0$. We aim to find new splitting penalties $\hat{w}_i \ge w_i$ that satisfy the submodular constraints while minimizing the largest ratio $\frac{\hat{w}(i)}{w(i)}$ over all $i > 0$. This is the same as finding another integer function $\hat{\vw}$ that solves the following optimization problem:
\begin{align}
       \minimize_{\hat{\vw}} \quad &  \kappa  \quad \label{eq:minmax1}\\
     \text{such that}\quad & \kappa \ge \frac{\hat{\vw}(i)}{\vw(i)}  \quad \forall i \in \{1,2, \hdots, q\}  \tag{\ref{eq:minmax1}a} \label{eq:constraint_1a}\\
     &2\vhw(i) \ge \vhw(i-1) + \vhw(i+1)  \quad \forall i = 2,3,\hdots,q-1 \tag{\ref{eq:minmax1}b} \label{eq:constraint_1b}\\
     &\vhw(i+1) \ge \vhw(i) \quad \forall i=1,2,\hdots,q-1 \tag{\ref{eq:minmax1}c} \label{eq:constraint_1c}\\  
     & \hat{\vw}(i) \ge \vw(i) \quad \forall i \in \{0,1,2, \hdots, q\} \tag{\ref{eq:minmax1}d} \label{eq:constraint_1d}
\end{align}
Since $\vw$ is given, this is just a small linear program (LP). However, we do not need a general LP solver to find the solution. We present a simple approach for finding an optimal $\hat{\vw}$ by casting it as the equivalent task of finding a piecewise-linear concave (PLC) function to ``cover'' the points $(i,\vw(i))$ as tightly as possible. For this latter formulation, we begin by defining a class of piecewise-linear concave functions of interest.

\begin{definition} \label{def:PLC_function}
    Let $\mathcal{F}_q$ denote the class of PLC functions $\hat{\vf}:[0,q] \rightarrow\mathbb{R}_+$, where each function $\hat{\vf} \in \mathcal{F}_q$ satisfies the following properties:
    \begin{enumerate}[label=(\alph*)]
        \item
        $\hat{\vf}(0) = 0$.
        \item
        $\hat{\vf}$ is non-decreasing: $\hat{\vf}(x) \ge \hat{\vf}(y)$ for all $x \ge y$.
        \item For $\ell \in \{1,2, \hdots, q\}$, $\hat{\vf}$ is a linear function $\vhf_{\ell}$ on interval $[\ell-1,\ell]$, with positive slope $m_{\ell} \geq 0$ and an intercept $c_{\ell}$:
        \begin{align*}
            \hat{\vf}(x) = \hat{\vf}_{\ell}(x) = m_{\ell}\cdot x + c_{\ell} \quad \quad \text{ for $x \in [\ell -1, \ell].$}
        \end{align*}
        \item[(d)] \phantomsection\label{prop:d} $\hat{\vf}$ is concave, meaning the slopes are non-increasing: $m_{\ell} \ge m_{\ell+1}$ for all $\ell \in \{1,2,\hdots,q\}$.
    \end{enumerate}
\end{definition}


We now recast optimization Problem~\eqref{eq:minmax1} 
as the following PLC cover problem:
\begin{align}
       \minimize_{\hat{\vf}} \quad &  \kappa  \quad \label{eq:minmax2}\\
     \text{such that}\quad & \kappa \ge \frac{\hat{\vf}(i)}{\vw(i)}  \quad \forall i \in \{1,2, \hdots, q\} \tag{\ref{eq:minmax2}a} \label{eq:constraint_2a}\\
     & \hat{\vf}(i) \ge \vw(i) \quad \forall i \in \{0,1,2, \hdots, q\} \tag{\ref{eq:minmax2}b} \label{eq:constraint_2b}\\
     &\hat{\vf} \in \mathcal{F}_q. \tag{\ref{eq:minmax2}c} \label{eq:constraint_2c}
\end{align}
\begin{figure}
    \centering
    \includegraphics[width=0.4\linewidth]{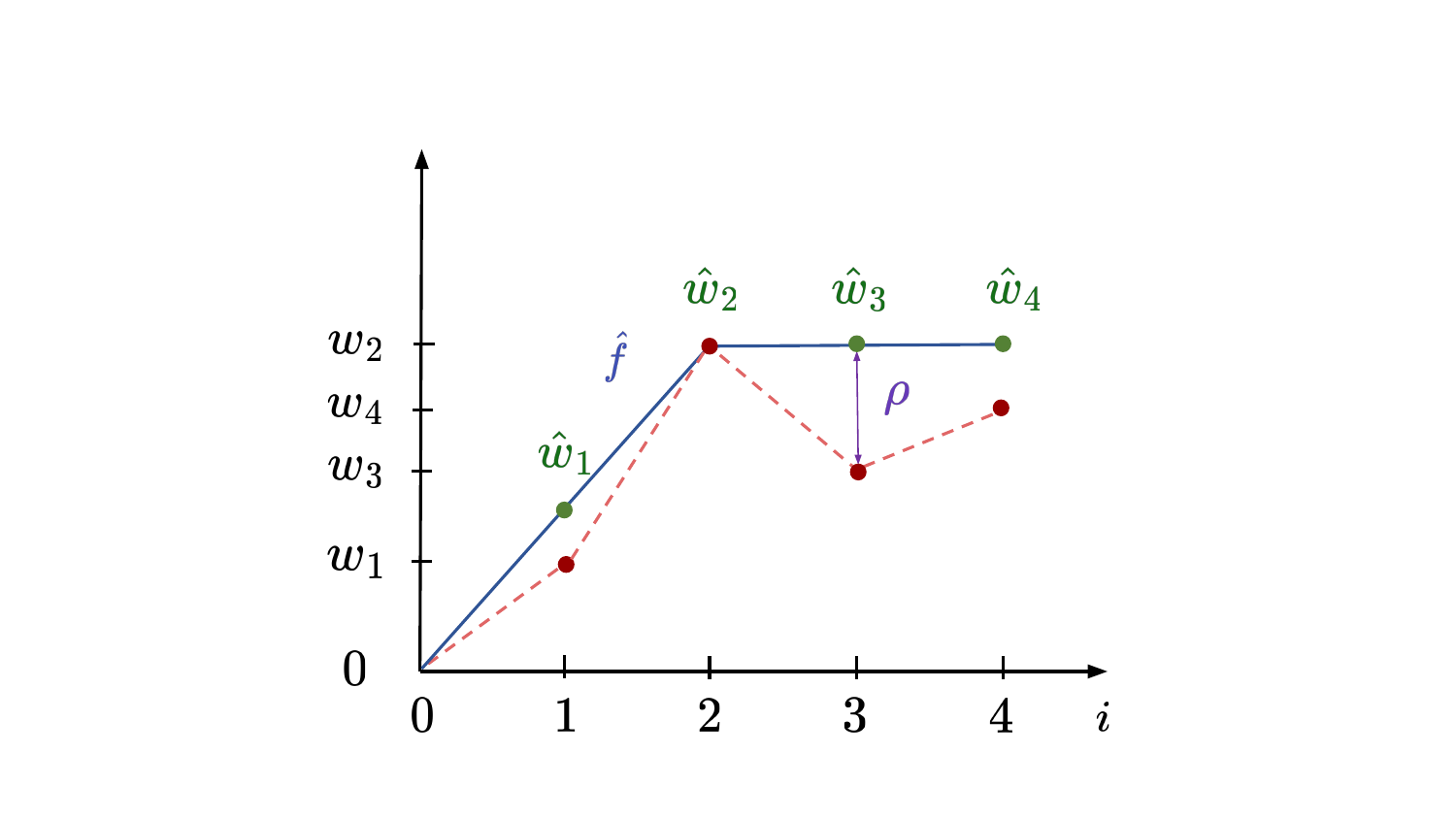}
    \caption{An optimal PLC cover $\vhf$ (shown in solid blue) for the integer function $\vw$, with the discrete values of $\vw(i)$ marked in red. The blue line segments for each interval $[i,i+1]$ represent the linear pieces of $\vhf$, while the largest gap between $\vhf(i)$ and $\vw(i)$ (here at $i=3$) is denoted by the approximation bound $\rho$. The red dashed line interpolates $\vw$; the fact that it is not a PLC function indicates that $\vw$ does not define splitting penalties for a submodular function.}
    \label{fig:plcc_ex}
\end{figure}
We refer to a feasible solution for this problem as a PLC cover for $\vw$. Figure~\ref{fig:plcc_ex} illustrates the optimal PLC cover $\hat{\vf}$ for a function $\vw$ corresponding to a set of non-submodular splitting penalties. Given a solution $\hat{\vf}$, we can define a set of submodular splitting parameters by $\hat{w}_i = \hat{\vf}(i)$ for $i \in \{0,1, \hdots, q\}$. The following lemma shows that this set of splitting parameters is the optimal projection for the splitting penalties defining $\vw$.
\begin{lemma}
    Let $\vhf \in \mathcal{F}_q$ solve Problem~\eqref{eq:minmax2}, and define $\vhw$ where $\vhw(i) = \vhf(i)$ for all $i\in \{0,1,2, \hdots, q\}$. Then, $\vhw$ solves Problem~\eqref{eq:minmax2}.
\end{lemma}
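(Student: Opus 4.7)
The plan is to establish an explicit bijective correspondence between feasible solutions of Problems~\eqref{eq:minmax1} and~\eqref{eq:minmax2} that preserves the objective value, via integer-point sampling in one direction and piecewise-linear interpolation in the other. The lemma then follows because the two problems share the same optimal value, and sampling an optimizer of~\eqref{eq:minmax2} at integer points yields an optimizer of the discrete formulation (interpreting the final reference in the lemma statement as~\eqref{eq:minmax1}).

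First I would show the forward direction: given feasible $\vhf \in \mathcal{F}_q$ for~\eqref{eq:minmax2} with objective $\kappa$, define $\vhw(i) := \vhf(i)$ for $i \in \{0, 1, \hdots, q\}$ and verify feasibility for~\eqref{eq:minmax1} with the same objective. Each constraint maps cleanly: constraint~\eqref{eq:constraint_1d} is the integer-point restriction of~\eqref{eq:constraint_2b}; constraint~\eqref{eq:constraint_1c} follows from monotonicity (property (b) of $\mathcal{F}_q$) applied at consecutive integers; constraint~\eqref{eq:constraint_1b} at integer $i$ is the midpoint inequality $2\vhf(i) \geq \vhf(i-1) + \vhf(i+1)$, which is an immediate consequence of concavity (property (d)); and constraint~\eqref{eq:constraint_1a} coincides with~\eqref{eq:constraint_2a} after substituting $\vhw(i) = \vhf(i)$.

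Next I would prove the reverse direction to certify that nothing is lost by restricting to PLC covers: given feasible $\vhw$ for~\eqref{eq:minmax1} with objective $\kappa$, construct the piecewise-linear interpolant $\vhf(x) = \vhw(\ell-1) + (x - (\ell-1)) \cdot (\vhw(\ell) - \vhw(\ell-1))$ on each interval $[\ell-1,\ell]$ and check that $\vhf \in \mathcal{F}_q$. Property (a) holds since $\vhf(0) = \vhw(0) = 0$; property (b) holds because each segment has non-negative slope by~\eqref{eq:constraint_1c}; property (c) is immediate from the construction; and property (d) requires that the slopes $m_\ell = \vhw(\ell) - \vhw(\ell-1)$ be non-increasing, which rearranges exactly to~\eqref{eq:constraint_1b}. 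Since $\vhf$ and $\vhw$ agree on integers, constraints~\eqref{eq:constraint_2b} and~\eqref{eq:constraint_2a} reduce to~\eqref{eq:constraint_1d} and~\eqref{eq:constraint_1a} with the same $\kappa$.

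The main bookkeeping subtlety will be the boundary slope comparison $m_1 \geq m_2$, which amounts to $2\vhw(1) \geq \vhw(2)$. This should be regarded as the $i = 1$ case of~\eqref{eq:constraint_1b} under the convention $\vhw(0) = 0$, consistent with the first submodularity inequality in Eq.~\eqref{eq:submodular}; no additional constraints beyond those listed in~\eqref{eq:minmax1} are required once this convention is in force. Apart from handling this endpoint carefully, every constraint in one problem corresponds to exactly one constraint in the other, so the two optimal values coincide and the lemma follows.
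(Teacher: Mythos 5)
Your proposal is correct and follows essentially the same route as the paper: verify feasibility of the integer restriction constraint-by-constraint in the forward direction, and use linear interpolation of a feasible $\vhw$ to get a feasible $\vhf \in \mathcal{F}_q$ with the same objective in the reverse direction (the paper packages this reverse step as a proof by contradiction rather than as an explicit value-preserving correspondence, but the content is identical). Your explicit check that the interpolant satisfies all four properties of Definition~\ref{def:PLC_function}, and your flagging of the boundary inequality $2\vhw(1) \ge \vhw(2)$ as the $i=1$ case of~\eqref{eq:constraint_1b} under the convention $\vhw(0)=0$, are details the paper elides, and both are handled correctly.
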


\begin{proof}
    The concavity property in Definition~\ref{def:PLC_function}(d) of $\vhf\in \mathcal{F}_q$ ensures that the slope of its linear pieces $\ell \in \{1,2, \hdots, q\}$ is non-increasing. For a given segment $\hat{\vf}_\ell$ on the interval $[\ell-1,\ell]$, the slope is give by $m_{\ell} = \vhf({\ell}) - \vhf(\ell-1)$. Similarly, for the next segment, we have $m_{\ell+1} = \vhf(\ell+1) - \vhf(\ell)$. Thus:
    \begin{align*}
        \vhf(\ell) - \vhf(\ell-1) \ge \vhf(\ell+1) - \vhf(\ell) &\implies 2\vhf(\ell) \ge \vhf(\ell-1)+\vhf(\ell+1) \implies 2\vhw(\ell) \ge \vhw(\ell-1)+\vhw(\ell+1).
    \end{align*}
This matches constraint~\eqref{eq:constraint_1b} of Problem~\eqref{eq:minmax1}. The non-decreasing function property in Definition~\ref{def:PLC_function}(b) directly satisfies constraint~\eqref{eq:constraint_1c}. Finally, since $\vhf$ is an upper bound for $\vhw$, we know
$\vhf(i) \ge \vw(i) \implies \vhw(i) \ge \vhw(i)$, satisfying 
 constraint~\eqref{eq:constraint_1d} of Problem~\eqref{eq:minmax1}. Thus, solving Problem~\eqref{eq:minmax2} produces a feasible solution to Problem \eqref{eq:minmax1}. In fact, this will produce an optimal solution to Problem~\eqref{eq:minmax1}. To prove this, assume instead that there exists some other feasible solution $\tilde{\vw}$ for Problem \eqref{eq:minmax1} that has a strictly lower objective score than $\hat{\vw}$ does. If this were true, we could construct a new function $\tilde{\vf} \in \mathcal{F}_q$ by taking the linear interpolation of the points in $\tilde{\vw}$. This would produce a strictly better solution to Problem~\eqref{eq:minmax2}, contradicting the optimality of $\hat{\vf}$.
\end{proof}

\subsection {Building an optimal PLC cover}
We solve Problem~\eqref{eq:minmax2} for an integer function $\vw$ by iteratively defining a set of linear pieces.

\textit{First set of linear pieces:} The first segment, $\vhf_1$, must pass through the origin so that $\vhf(0)=0$. The slopes of $\vhf_{\ell}$ must also be non-increasing to maintain concavity. To find $\vhf_1$ with a slope of $m_1 \ge m_{\ell}$ for every $\ell \in \{1,2, \hdots, q\}$, we consider different lines defined by connecting the origin to each point $(i,\vw(i))$ for $i \in \{1,2, \hdots, q\}$.
Among these options, we select the line going through the origin and $(i^*,\vw(i^*))$ where $i^*$ is chosen to maximize the slope $\vw(i^*)/i^*$. Choosing a line with a less steep slope would lead to a violation in the requirement that $\hat{\vf}(i) \geq \vw(i)$ for every $i \in \{1,2, \hdots, q\}$. Choosing a line with a steeper slope would be suboptimal as it leads to higher ratios $\hat{\vf}(i)/\vw(i)$ for $i \in \{1,2,\hdots,i^*\}.$ Thus, selecting the line through the origin with slope $\vw(i^*)/i^*$ gives the tightest upper bound for $\vw$ over the interval $[0,i^*]$, and gives us all line segments for $\hat{\vf}$ in that interval:
\begin{equation*}
    \vhf_{\ell}(x) = \frac{\vw(i^*)}{i^*}\cdot x \quad \text{ for every } \ell \in \{1,2, \hdots, i^*\}.
\end{equation*}

\textit{Iteratively constructing the next linear pieces:}  Consider a general setting where we are given a starting integer $t$ and we assume we have a set of linear pieces for an optimal $\vhf$ over the interval $[0,t]$, with $\vhf(t) = \vw(t)$. We then wish to find the next set of segments $\vhf_{\ell}$ that bounds $\vw$ in the interval $[t,q]$. We first check whether $\vhf(t) = \vw(t) \geq \vw(i)$ for every $i \in [t,q]$. If so, then we choose the minimum possible slope of 0 (to ensure $\vhf$ is an increasing function) and define $\hat{\vf}(x) = \vhf(t)$ for $x \in [t,q]$. 
If not, similar to our approach for the first linear pieces, we consider all lines between $(t,\vw(t))$ and every other point $(i,\vw(i))$ for $i\in [t+1,q]$ for which $\vw(i) > \vw(t)$. By construction, all of these linear pieces are less steep that the lines used in defining $\vhf$ over the interval $[0,t]$. We then select a new line with the steepest slope $m^*$ among our choices for the next linear piece. Assume this connects point $(t,\vw(t))$ to $(i^*,\vw(i^*))$ for some $i^* \in [t+1,q]$. The linear pieces defining $\vhf_{\ell}$ for $\ell \in [t+1,i^*]$ are then given by:
\begin{align*}
    \vhf_{\ell}(x) = \frac{\vw(i^*) - \vw(t)}{i^* - t}(x-t) + \vw(t) \quad \text{ for every } \ell \in [t+1,i^*].
\end{align*}
At this point, we have the tightest upper bound for $\vw$ over the interval $[0,i^*]$, and we apply the same steps recursively to the remaining interval $[i^*,q]$. This greedy procedure continues until a point where we add a linear piece with slope zero, or until $i^* = q$ for a given iteration. The pseudocode for building the nearest PLC cover using this greedy approach is outlined in Algorithm~\ref{alg:algo_projection}.  We summarize with the following lemma.
\begin{lemma}
    Algorithm~\ref{alg:algo_projection} solves Problem~\eqref{eq:minmax1} by constructing an optimal PLC cover $\vhf:[0,q] \rightarrow \mathbb{R}_+$ for Problem~\eqref{eq:minmax2}.
\end{lemma}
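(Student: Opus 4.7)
The plan is to establish two facts: (i) the function $\vhf$ output by Algorithm~\ref{alg:algo_projection} is a feasible PLC cover, meaning $\vhf\in \mathcal{F}_q$ with $\vhf(i)\ge \vw(i)$ for every $i$; and (ii) any other feasible PLC cover $\tilde{\vf}\in \mathcal{F}_q$ dominates $\vhf$ pointwise at integers, meaning $\tilde{\vf}(i)\ge \vhf(i)$ for every $i\in \{0,1,\hdots,q\}$. Together these would force $\max_i \vhf(i)/\vw(i)\le \max_i \tilde{\vf}(i)/\vw(i)$, so that $\vhf$ is optimal for Problem~\eqref{eq:minmax2}; the previous lemma then yields optimality of the induced splitting vector $\vhw$ for Problem~\eqref{eq:minmax1}.

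Let $0=i_0<i_1<\hdots<i_K$ denote the sequence of pivot points chosen by the greedy procedure, so that $\vhf$ linearly interpolates $(i_j,\vw(i_j))$ and $(i_{j+1},\vw(i_{j+1}))$ on each $[i_j,i_{j+1}]$, possibly followed by a constant piece at height $\vw(i_K)$ on $[i_K,q]$. For feasibility, $\vhf(0)=0$ is immediate from the first segment, and the non-negativity and monotonicity of $\vhf$ follow because each successor either strictly increases $\vw$ (positive slope) or triggers a flat tail (slope zero). The bound $\vhf(i)\ge \vw(i)$ at any interior integer $i\in (i_j,i_{j+1})$ follows because $i_{j+1}$ was chosen to maximize the successor slope from $i_j$, so the chord from $(i_j,\vw(i_j))$ to $(i,\vw(i))$ has slope at most $m_{j+1}$. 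The main technical check is concavity $m_{j+1}\le m_j$: writing the slope from $i_{j-1}$ to $i_{j+1}$ as the convex combination $[m_j(i_j-i_{j-1})+m_{j+1}(i_{j+1}-i_j)]/(i_{j+1}-i_{j-1})$ and using the fact that $i_j$ was the steepest successor of $i_{j-1}$, this combination is at most $m_j$, which rearranges to $m_{j+1}\le m_j$.

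For the domination step I would argue by cases on $i$. At any pivot point, $\tilde{\vf}(i_j)\ge \vw(i_j)=\vhf(i_j)$ holds by feasibility of $\tilde{\vf}$. At an interior integer $i\in (i_j,i_{j+1})$, concavity of $\tilde{\vf}$ lower-bounds $\tilde{\vf}(i)$ by the chord value between $(i_j,\tilde{\vf}(i_j))$ and $(i_{j+1},\tilde{\vf}(i_{j+1}))$; applying $\tilde{\vf}(i_k)\ge \vw(i_k)$ at the two pivots raises this chord value to exactly the chord between $(i_j,\vw(i_j))$ and $(i_{j+1},\vw(i_{j+1}))$, which by definition equals $\vhf(i)$. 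On any trailing flat region $[i_K,q]$, the non-decreasing property of $\tilde{\vf}$ gives $\tilde{\vf}(i)\ge \tilde{\vf}(i_K)\ge \vw(i_K)=\vhf(i)$. I expect the main obstacle to be the concavity verification for $\vhf$ from the greedy rule; the domination step itself is a clean application of the principle that a concave upper bound to a finite set of points must dominate the tightest concave upper bound at those points.
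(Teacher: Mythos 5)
Your proof is correct, and it is a rigorous version of the argument the paper only sketches: the paper justifies the greedy construction locally (a shallower slope would violate the covering constraint, a steeper one would only increase the ratios on $[0,i^*]$), whereas you prove optimality globally by showing that \emph{every} feasible PLC cover $\tilde{\vf}$ dominates the greedy output $\vhf$ pointwise at the integers. Both your technical steps check out: the concavity verification (writing the chord slope from $i_{j-1}$ to $i_{j+1}$ as a length-weighted convex combination of $m_j$ and $m_{j+1}$ and using that $i_j$ maximized the slope from $i_{j-1}$) and the domination step (concavity of $\tilde{\vf}$ puts it above the chord through $(i_j,\tilde{\vf}(i_j))$ and $(i_{j+1},\tilde{\vf}(i_{j+1}))$, which in turn lies above the chord through $(i_j,\vw(i_j))$ and $(i_{j+1},\vw(i_{j+1}))$, i.e.\ above $\vhf$). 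Your domination lemma actually yields something slightly stronger than the paper claims: $\vhf$ is the pointwise-minimal feasible cover at integer points, so it is simultaneously optimal for \emph{any} objective that is monotone in the values $\vhf(1),\hdots,\vhf(q)$, not just the max-ratio $\kappa$. The only cosmetic gap is that you should note the degenerate first iteration ($j=0$, where $i_{-1}$ does not exist) and the flat tail (slope $0 \le m_K$) as base cases of the concavity check, but both are immediate.
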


\begin{algorithm}[t]
	\caption{Solve Problem~\eqref{eq:minmax1} via PLC cover}
 	\label{alg:algo_projection}
	\begin{algorithmic}
		\STATE{\bfseries Input:} Integer function $\vw \colon \{0,1,2, \hdots, q\} \rightarrow \mathbb{R}_+$ with $\vw(0) = 0$ 
		\STATE {\bfseries Output:} $\hat{\vw}$ solving Problem~\eqref{eq:minmax1}
        \STATE Initialize $\hat{\vw}(i) = 0$ for every $i \in \{0,1, \hdots, q\}$
        \STATE Initialize $t=0$
        \WHILE{$ t \le q$} 
        \STATE Define $m_i  = \frac{\vw(i) - \vw(t)}{i - t}$ for $i \in [t+1,q]$
            \IF{ $\max_{i \in [t+1,q]} m_i \leq 0$}
            \STATE $i^* = q, m^* = 0$ \hfill \texttt{// flat line to end}
            \ELSE
            \STATE $i^* = \argmax_{i \in [t+1,q]} m_i$
            \STATE $m^* = m_{i^*}$ \hfill \texttt{// positive slope linear piece}
            \ENDIF
            \FOR{$i = t+1$ to $i^*$}
            \STATE $\hat{\vw}(i) = m^*(i-t) + \vw(t)$
            \ENDFOR
            \STATE $t \leftarrow i^*$
            \ENDWHILE
		\STATE {\bfseries Return:} $\hat{\vw}$
	\end{algorithmic}
\end{algorithm}


        \section{Approximation Hardness Results}
\label{sec:approxhard}
We now prove harness results to strongly indicate that our projection-based technique for non-submodular \cbcut{} produces the best approximation guarantees we can hope for.
We focus specifically on $4$-\cbcut{} when $w_1 = 1$, as this represents the most basic type of NP-hard non-submodular case.
We first show that the reductions from \textsc{MaxCut} (Section~\ref{sec:4cbcut}) not only imply NP-hardness, but also APX-hardness in the non-submodular region $w_2 \notin [1,2]$. This means that for \textsc{CBcut}$(4,\{w_1 = 1, w_2\}$ with $w_2 \notin [1,2]$, there exists some constant $c > 1$ such that it is NP-hard to approximate the problem below a factor $c$. More significantly, we prove that assuming the Unique Games Conjecture, the problem is hard to approximate below the approximation factors obtained by projecting non-submodular penalties to the submodular region.

\subsection{APX-hardness via Reduction from \textsc{MaxCut}}
Let $G = (V,E)$ represent an instance of \textsc{MaxCut} where $m = |E|$ and where we use $k^*$ to denote the optimal number of edges that are cut. For our APX-hardness results, we will use the fact that there is always a way to cut at least half the edges in a graph, so $k^* \geq m/2$. We also use the fact that \textsc{MaxCut} is NP-hard to approximate to within a factor better than $16/17$~\cite{hastad}.

\begin{lemma}
    For fixed $w_2 > 2$, i is NP-hard to approximate \cbf{} to within a factor smaller than 
    \begin{equation*}
        1 + \frac{w_2 - 2}{17w_2 + 34}.
    \end{equation*}
\end{lemma}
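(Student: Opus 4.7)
The plan is to leverage the gap-preserving reduction from \textsc{MaxCut} given by the $w_2 > 2$ gadget in Section~\ref{sec:4cbcut}, and turn H\r{a}stad's $16/17$ inapproximability for \textsc{MaxCut} into a quantitative lower bound for \cbf{}. Let $G = (V,E)$ with $|E| = m$ be the \textsc{MaxCut} instance, $k^*$ its optimal value, and let $\mathcal{H}$ be the hypergraph built by replacing each edge with the gadget (two hyperedges plus two heavy weight-$2w_2$ edges). Because the heavy edges are never cut at optimum, the nodes $x$ and $y$ in each gadget stay on the $s$- and $t$-side respectively.

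The first technical step is to express the hypergraph cut cost as an exact affine function of the corresponding \textsc{MaxCut} value. For each original edge $(u,v)$: if $u,v$ end up on the same side then exactly one of the two hyperedges contributes a $(2,2)$-split of cost $w_2$ and the other is uncut, while if $u,v$ are separated both hyperedges contribute a $(1,3)$-split of cost $1$ each. Thus if a bipartition cuts $k$ of the $m$ edges of $G$, the hypergraph cut cost is exactly
\begin{equation*}
   C(k) \;=\; 2k + w_2(m-k) \;=\; w_2 m - (w_2-2)\,k,
\end{equation*}
so the optimum is $C^* = w_2 m - (w_2-2)\,k^*$, and any feasible hypergraph cut of value $C$ can be decoded (without loss of generality, since heavy edges force $x,y$ to the correct side) to a \textsc{MaxCut} solution of value $k = (w_2 m - C)/(w_2 - 2)$.

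Next, I would suppose for contradiction that a polynomial-time algorithm achieves approximation ratio $\alpha$ for \cbf{}, i.e.\ returns $C \le \alpha C^*$. Substituting the affine identity and rearranging gives
\begin{equation*}
    k \;\ge\; \alpha k^* \;-\; \frac{(\alpha-1)\, w_2\, m}{w_2-2}.
\end{equation*}
Using the standard fact $k^* \ge m/2$ (any \textsc{MaxCut} instance has a cut of size at least $m/2$), we obtain
\begin{equation*}
    \frac{k}{k^*} \;\ge\; \alpha \;-\; \frac{2(\alpha-1)\,w_2}{w_2-2}.
\end{equation*}
This turns an $\alpha$-approximation for \cbf{} into an approximation for \textsc{MaxCut} with ratio $\beta(\alpha) = \alpha - 2(\alpha-1)w_2/(w_2-2)$.

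The final step is to invoke H\r{a}stad's result that \textsc{MaxCut} is NP-hard to approximate within any factor strictly larger than $16/17$. So any $\alpha$ for which $\beta(\alpha) > 16/17$ must be NP-hard to attain. Writing $\delta = \alpha - 1$ and solving
\begin{equation*}
    1 + \delta - \frac{2\delta(w_2-2+2)}{w_2-2} \;>\; \frac{16}{17}
    \quad\Longleftrightarrow\quad
    \delta \cdot \frac{w_2+2}{w_2-2} \;<\; \frac{1}{17}
    \quad\Longleftrightarrow\quad
    \delta \;<\; \frac{w_2 - 2}{17 w_2 + 34},
\end{equation*}
gives precisely the bound in the statement. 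I do not expect any serious obstacle: the gadget already establishes an exact one-to-one correspondence between feasible cuts and MaxCut solutions, so the entire argument is a routine inapproximability transfer. The only thing to handle with a little care is ensuring that the hypergraph-cut solution can be decoded into an actual \textsc{MaxCut} solution --- which is immediate because the heavy-weight side edges of weight $2w_2$ can be chosen large enough (relative to the polynomially-bounded $w_2$ regime in which the reduction lives) that they are never cut.
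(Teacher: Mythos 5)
Your proposal is correct and follows essentially the same route as the paper: the same $w_2>2$ gadget, the same exact cost identity $2k+(m-k)w_2$, the same use of $k^*\ge m/2$, and the same appeal to H\r{a}stad's $16/17$ bound, with only a cosmetic difference in that you solve for the threshold on $\alpha-1$ while the paper directly lower-bounds $\beta$. The algebra checks out and yields the identical bound $1+\frac{w_2-2}{17w_2+34}$.
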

\begin{proof}
Consider the reduction from \textsc{MaxCut} to \cbf{} (the problem variant with pairwise edges) for $w_2 > 2$ in Figure~\ref{fig:4cbcut}. Recall that we fix $w_1 = 1$. 
A bipartition of the nodes in $V$ that cuts $k$ edges corresponds to an $s$-$t$ cut in the reduced hypergraph with cut value
$2k + (m-k)w_2$.
Consider a hypergraph $s$-$t$ cut algorithm that produces a cut with value $2k'+(m-k')w_2$, where $k'\le k^*$ is the number of gadgets where the nodes $(u,v) \in E$ defining the gadget are split into different sides of the $s$-$t$ cut. If the algorithm is a $\beta$-approximation algorithm, then
\begin{align*}
\label{eq: maxcut1} 
   2k'+(m-k')w_2 &\leq \beta (2k^*+(m-k^*)w_2) \\
   \implies mw_2 + (2-w_2)k' &\leq \beta m w_2 + \beta(2-w_2)k^* \\
   \implies mw_2 + \beta(w_2 - 2)k^* &\leq \beta m w_2 + (w_2-2)k'\\
   \implies \beta(w_2-2)k^* &\leq (\beta - 1)mw_2 + (w_2-2)k'\leq (\beta - 1)2k^*w_2 + (w_2 - 2)k',
\end{align*} 
where in the last step we have used the fact that $k^* \ge m/2$.
Combining this with the fact that $k'/k^* \leq 16/17$ leads to a lower bound on the best approximation factor we can hope to achieve:
\begin{align*}
    k^*[\beta (w_2 - 2) -2(\beta -1)w_2] &\le (w_2 - 2)k'   \\
    \implies \beta(w_2 - 2) - 2(\beta -1)w_2 &\le {\frac{k'}{k^*}}(w_2-2) \leq \frac{16}{17}(w_2 - 2)\\
    \implies \beta(w_2 -2 -2w_2) & \le \frac{16w_2 - 32 -34w_2}{17}\\
    \implies \beta(2+w_2) &\ge \frac{18w_2 + 32}{17}\\
    \implies \beta &\ge \frac{18w_2 + 32}{17(2+w_2)}= \frac{18w_2 + 32}{17w_2 + 34} = 1+ \frac{w_2 - 2}{17w_2 + 34}.
 \end{align*}
\end{proof}
We prove a similar result when $w_2 < 1$.
\begin{lemma}
    For fixed $w_2 < 1$, it is NP-hard to approximate \cbf{} to within a factor smaller than 
    \begin{equation*}
        1 + \frac{1-w_2}{17(1+w_2)}.
    \end{equation*}
\end{lemma}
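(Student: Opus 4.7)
The plan is to mirror the previous lemma's argument, but using the left-panel gadget of Figure~\ref{fig:4cbcut} (the original reduction of Veldt et al.~\cite{veldt2022hypergraph} for $w_2 < 1$) instead of the right-panel gadget. Given a \textsc{MaxCut} instance $G = (V,E)$ with $m = |E|$, replace each edge $(u,v) \in E$ with a size-4 hyperedge $e_{uv} = (u,v,s,t)$. Every hyperedge must be cut because each contains both $s$ and $t$. If $u$ and $v$ lie on opposite sides of the $s$-$t$ cut, $e_{uv}$ is split $(2,2)$ with penalty $w_2$; otherwise it is split $(1,3)$ with penalty $w_1 = 1$. Thus a bipartition cutting exactly $k$ edges in $G$ corresponds to a hypergraph $s$-$t$ cut of value $kw_2 + (m-k) = m - k(1 - w_2)$, and since $w_2 < 1$, minimizing the hypergraph cut is equivalent to maximizing $k$.

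Now suppose $\mathcal{A}$ is a $\beta$-approximation algorithm for \cbf{}, and let $k'$ denote the number of gadgets in which $u$ and $v$ are separated by the cut that $\mathcal{A}$ produces; clearly $k' \leq k^*$, where $k^*$ is the optimal MaxCut value. The $\beta$-approximation inequality becomes
\begin{align*}
    m - k'(1-w_2) \;\leq\; \beta\bigl[m - k^*(1-w_2)\bigr],
\end{align*}
which I rearrange to
\begin{align*}
    (1-w_2)\bigl[\beta k^* - k'\bigr] \;\leq\; (\beta - 1)m \;\leq\; 2(\beta-1)k^*,
\end{align*}
using $m \leq 2k^*$ (every graph has a cut of size at least $m/2$). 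Dividing by $k^*$ and substituting the H{\aa}stad hardness bound $k'/k^* \leq 16/17$ yields
\begin{align*}
    (1-w_2)\left[\beta - \frac{16}{17}\right] \;\leq\; 2(\beta - 1),
\end{align*}
which simplifies to $\beta(1+w_2) \geq \frac{18 + 16 w_2}{17}$, i.e.,
\begin{align*}
    \beta \;\geq\; \frac{18 + 16 w_2}{17(1+w_2)} \;=\; 1 + \frac{1 - w_2}{17(1+w_2)}.
\end{align*}

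There is no real obstacle here; the argument is structurally identical to the $w_2 > 2$ case, and the main care point is simply keeping the direction of the inequalities straight given that for $w_2 < 1$ the quantity $m - k(1-w_2)$ is \emph{minimized} (not maximized) when $k$ is large, so the approximation inequality reads ALG $\leq \beta \cdot$ OPT with both expressions being decreasing functions of the number of separated pairs. The only algebraic subtlety is verifying that $\frac{18 + 16 w_2}{17(1+w_2)}$ equals $1 + \frac{1-w_2}{17(1+w_2)}$, which follows from writing $17(1+w_2) + (1-w_2) = 18 + 16 w_2$.
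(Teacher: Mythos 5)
Your proposal is correct and follows essentially the same route as the paper's proof: the same left-panel gadget $(u,v,s,t)$, the same identity $m - k(1-w_2)$ for the cut value, the same use of $k^* \ge m/2$ and the H{\aa}stad bound $k'/k^* \le 16/17$, and the same algebra leading to $\beta(1+w_2) \ge \frac{18+16w_2}{17}$. The rearrangement you present is just a slightly more compact packaging of the identical chain of inequalities.
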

\begin{proof}
Consider the reduction from \textsc{MaxCut} to \cbf{} for $w_2 < 1$ in Figure~\ref{fig:4cbcut}. Cutting $k$ edges in the \textsc{MaxCut} instance $G = (V,E)$ corresponds to a hypergraph $s$-$t$ cut value of $kw_2 + (m-k)$ (recall that $m = |E|$). Assume we have a $\beta$-approximation algorithm for the \cbf{} problem. Let $k'$ be the number of gadgets where the algorithm separates the nodes $(u,v)$ defining the gadget. We know then that
\begin{align*}
    k'w_2 + m - k' &\le \beta(k^*w_2 + m - k^*)\\
    \implies k'(w_2 - 1) + m &\le \beta k^*(w_2 - 1) + \beta m\\
    \implies \beta k^*(1-w_2) &\le k'(1-w_2) + (\beta - 1)m\\
      &\le k'(1-w_2) + 2(\beta - 1)k^* & \text{(using $k^* \ge m/2$)} \\ 
    \implies k^*[\beta(1- w_2) - 2(\beta -1)] &\le k'(1-w_2)\\
    \implies {(\beta(1- w_2) - 2(\beta -1))} &\le \frac{16}{17}(1-w_2) & \text{(using ${k'}/{k^*} \leq 16/17$)}\\
    \implies \beta(1-w_2-2) &\le \frac{16}{17}(1-w_2) -2\\
    \implies \beta(-w_2 - 1) &\le \frac{-18 - 16w_2}{17}
    \end{align*}
Thus
\begin{equation*}
    \beta(1+w_2) \ge \frac{18+16w_2}{17}\implies \beta \ge 1 + \frac{1-w_2}{17(1+w_2)}.
\end{equation*}
\end{proof}
These lemmas tell us that for any fixed $w_2 \notin [1,2]$, there exists an $\epsilon > 0$ (depending on $w_2$) such that approximating $4$-\cbcut{} to within a factor $1+\epsilon$ is NP-hard. As $w_2$ gets further from the submodular region (i.e., taking a limit $w_2 \rightarrow 0$ or $w_2 \rightarrow \infty$), the approximation factor gets worse. However, even for the most extreme values of $w_2$, this only rules out the possibility of obtaining an approximation better than ${18}/{17}$. In contrast, our best approximation factors become arbitrarily bad as $w_2$ goes to zero or infinity. We would like, therefore, to tighten this gap to show that the best approximation factors also get arbitrarily bad as we get further from submodularity.

\subsection{Inapproximability via Unique Games Conjecture}
To prove stronger approximation results, we again leverage the connection to Valued Constraint Satisfaction Problems. Every VCSP permits a linear programming (LP) relaxation known as the Basic LP~\cite{thapper2012power}, which we will define shortly for VCSPs corresponding to hypergraph $s$-$t$ cut problems. This LP includes an LP variable $x_{v,\ell} \in [0,1]$ for every VCSP variable $v$ and possible assignment $\ell$ for $v$. For Boolean VCSPs, $\ell \in \{0,1\}$, though the Basic LP is also defined for non-Boolean VCSPs. The other variables and constraints in the LP are formulated in such a way that the optimal solution to the LP lower bounds the optimal solution for the VSCP instance. The optimal value for $x_{v,\ell}$ can be interpreted as the probability of assigning variable $v$ to label $\ell$. 

Ene et al.~\cite{ene2013local} proved that when the constraint language includes the \emph{not-all-equal} predicate, the integrality gap of this LP relaxation lower bounds the best possible approximation factor for the corresponding valued constraint language, assuming the Unique Games Conjecture. In more detail, the \emph{not-all-equal} predicate on two variables is defined as
\begin{align*}
\text{NAE}_2(x,y) =
\begin{cases} 
      0 & \text{if } x = y \\
      1 & \text{if } x \neq y.
\end{cases}
\end{align*}
When viewing variables as nodes in a graph, this corresponds to a cut function for a single edge: the penalty is one if two nodes are separated and is 0 otherwise (this can be scaled by a non-negative value for weighted cut functions).
We will leverage this result to prove new UGC-hardness results for approximating $4$-\cbcut{} \textsc{with edges} when $w_1 = 1$ and $w_2 \notin [1,2]$.

\paragraph{The Basic LP for hypergraph $s$-$t$ cut problems.}
Having established the relationship between VCSPs and generalized hypergraph $s$-$t$ cut problems in Sections~\ref{sec:prelims} and~\ref{sec:nphard}, we will go back and forth between the two views interchangeably. For a generalized hypergraph $s$-$t$ cut problem on $\mathcal{H} = (\V, \E)$ (whether or not splitting functions are cardinality-based), the Basic LP relaxation is given by
\begin{align}
\label{eq:basiclp}
\text{minimize} \quad & \sum_{e \in \E} \delta_e \cdot \sum_{A \subseteq e} y_{e,A} \cdot \vw_e(A) \quad & \\
\label{lp1}
\text{subject to} \quad &\forall e \in \E , \forall v \in e\colon \quad x_{v,s} = \sum_{A \subseteq e : v \in A} y_{e,A}  \\
\label{lp2}
&\forall e \in \E , \forall v \in e\colon \quad  x_{v,t} = \sum_{A \subseteq e : v \in e \setminus A} y_{e,A}  \\
&\forall e\in \E,\forall A \in e\colon \quad 0\le y_{e,A}\le 1\\
& \forall v \in \V \colon \quad x_{v,s} + x_{v,t} = 1  \\
& \forall v \in \V \colon \quad 0 \le x_{v,s} \le 1 \text{ and } 0 \le x_{v,t} \le 1 \\
& x_{s,s} = 1 \text{ and } x_{t,t} = 1
\end{align} 
Here, $\delta_e \geq 0$ is a scalar weight associated with hyperedge $e \in \E$. This Basic LP for hypergraph $s$-$t$ cut problems can be easily derived from the more general presentation of the Basic LP in Section 4 of Ene et al.~\cite{ene2015local} (the online full version of an earlier conference paper~\cite{ene2013local}). We use variables $x_{v,s}$ and $x_{v,t}$ for each node $v \in \V$ to indicate the fractional assignment of $v$ to the $s$ and $t$ sides. If we restricted variables to be binary, then the solution would exactly be the optimal solution for the generalized hypergraph $s$-$t$ cut problem. Hence, the solution to the LP lower bounds the optimal $s$-$t$ cut value. The hardness result of Ene et al.~\cite{ene2015local} for VCSPs translates to the following result for the above Basic LP for generalized hypergraph $s$-$t$ cut problems. 
\begin{theorem}
\label{thm:ene}
    Consider the generalized hypergraph $s$-$t$ cut problem for a class of splitting functions that includes the standard cut function for size-2 hyperedges. Assume there exists some instance $\mathcal{H}$ such that the optimal $s$-$t$ cut for this instance has value $\mathit{OPT}(\mathcal{H})$ and the optimal solution to the Basic LP in~\eqref{eq:basiclp} is $\mathit{LP}(\mathcal{H})$. It is UGC-hard to approximate this class of hypergraph $s$-$t$ cut problems to within a factor strictly smaller than $\mathit{OPT}(\mathcal{H})/\mathit{LP}(\mathcal{H})$.
\end{theorem}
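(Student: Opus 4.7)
The plan is to derive Theorem~\ref{thm:ene} as a direct specialization of the UGC-hardness result of Ene et al.~\cite{ene2015local, ene2013local} for valued constraint languages that contain the binary not-all-equal predicate. First I would cast a generic instance of the generalized hypergraph $s$-$t$ cut problem as a VCSP, following the template of Lemma~\ref{lem:equivalence} but extended to splitting functions that need not be cardinality-based. Introduce a Boolean variable for each non-terminal node $v \in \V \setminus \{s,t\}$, and for each hyperedge $e$ add a constraint whose cost function is $\vw_e$ evaluated on the subset of its scope assigned to $1$ (the $s$-side). The terminal constraints $s \in S$ and $t \in \bar S$ are standardly enforced by pre-assigning the two corresponding Boolean variables, or equivalently by unary cost functions with arbitrarily large penalty for the wrong assignment; either way this does not enlarge the constraint language in a way that affects the hardness argument.

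Next I would use the assumption that the class of splitting functions includes the standard cut function on size-2 hyperedges. A single size-2 hyperedge with scalar weight $\omega$ contributes cost $\omega$ if its two endpoints are separated and $0$ otherwise, which is exactly (a nonnegative rescaling of) $\mathrm{NAE}_2$ applied to the two corresponding Boolean variables. Thus the constraint language $\Gamma$ produced by our reduction contains $\mathrm{NAE}_2$, and moreover $\mathrm{NAE}_2$ can be placed on any pair of free (non-terminal) variables, which is precisely the structural hypothesis required to invoke the main hardness theorem of Ene et al.

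The remaining step is to verify that the LP in~\eqref{eq:basiclp} is literally the specialization of the generic Basic LP of Ene et al.~to the VCSP produced above. Their formulation has a variable $x_{v,\ell}$ for every VCSP variable $v$ and label $\ell$, a variable $y_{c,\tau}$ for every constraint $c$ and joint assignment $\tau$ of its scope, marginal consistency constraints linking the two, and the standard simplex constraints on each group. Under the identification $\ell=s$ with Boolean value $1$, $\ell=t$ with Boolean value $0$, and $\tau = A$ with the subset of $e$ assigned to the $s$-side, the marginal consistency conditions become exactly~\eqref{lp1} and~\eqref{lp2}, the simplex constraints on $x$ become $x_{v,s}+x_{v,t}=1$, and the pre-assignment of terminals yields $x_{s,s}=x_{t,t}=1$; the LP objective then matches ours up to the per-hyperedge weighting $\delta_e$. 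Hence $\mathit{LP}(\mathcal{H})$ coincides with the Basic LP value for the VCSP instance and $\mathit{OPT}(\mathcal{H})$ with the VCSP integer optimum, so the theorem of Ene et al.~immediately yields UGC-hardness of approximation below $\mathit{OPT}(\mathcal{H})/\mathit{LP}(\mathcal{H})$.

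The main obstacle is purely one of careful bookkeeping in the reduction: one must confirm that (i) pre-assigning $s$ and $t$ does not compromise the applicability of the Ene et al.~framework, which it does not because their argument only requires $\mathrm{NAE}_2$ to be expressible on some pair of free variables, and (ii) the specific form of the hypergraph $s$-$t$ cut Basic LP in~\eqref{eq:basiclp} is not strictly weaker than the generic VCSP Basic LP, which follows from the variable identifications above. No new combinatorial content is required beyond verifying these two alignments; once they are in place the conclusion is a direct translation.
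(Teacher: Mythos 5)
Your proposal is correct and takes essentially the same route as the paper: the paper simply states that the theorem is a special case of Theorem 4.3 of Ene et al., with terminology adapted to VCSPs arising from hypergraph $s$-$t$ cuts, and your write-up just fills in the bookkeeping of that adaptation (the VCSP encoding, the presence of $\mathrm{NAE}_2$ via weighted size-2 hyperedges, and the identification of the Basic LP). No substantive difference in approach.
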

This result is not new, it is a special case of Theorem 4.3 from the work of Ene et al.~\cite{ene2015local}. We have adapted terminology and notation to apply to the special case of VCSPs corresponding to generalized hypergraph $s$-$t$ cut problems. 

The $4$-\cbcut{} \textsc{with edges} problem is equivalent to the constraint language with cost functions $\{\phi_4, \phi_s, \phi_t, \phi_{st}, \text{NAE}_2\}$. Equivalently, this is a class of hypergraph $s$-$t$ cut problems with two splitting functions: the standard graph cut function (corresponding to $\text{NAE}_2$), and the 4-node cardinality-based splitting function with penalties $w_1 = 1$ and $w_2$. Since this includes the standard graph cut function, Theorem~\ref{thm:ene} applies. In order to prove UGC-hardness results, we just need to show a relevant integrality gap example for the Basic LP.

\paragraph{Integrality gap for $w_2 < 1$.}
\begin{figure}
    \centering
    \subfigure[Optimum solution with $OPT^* = 1$]{%
    \centering
        \includegraphics[width=0.4\linewidth]{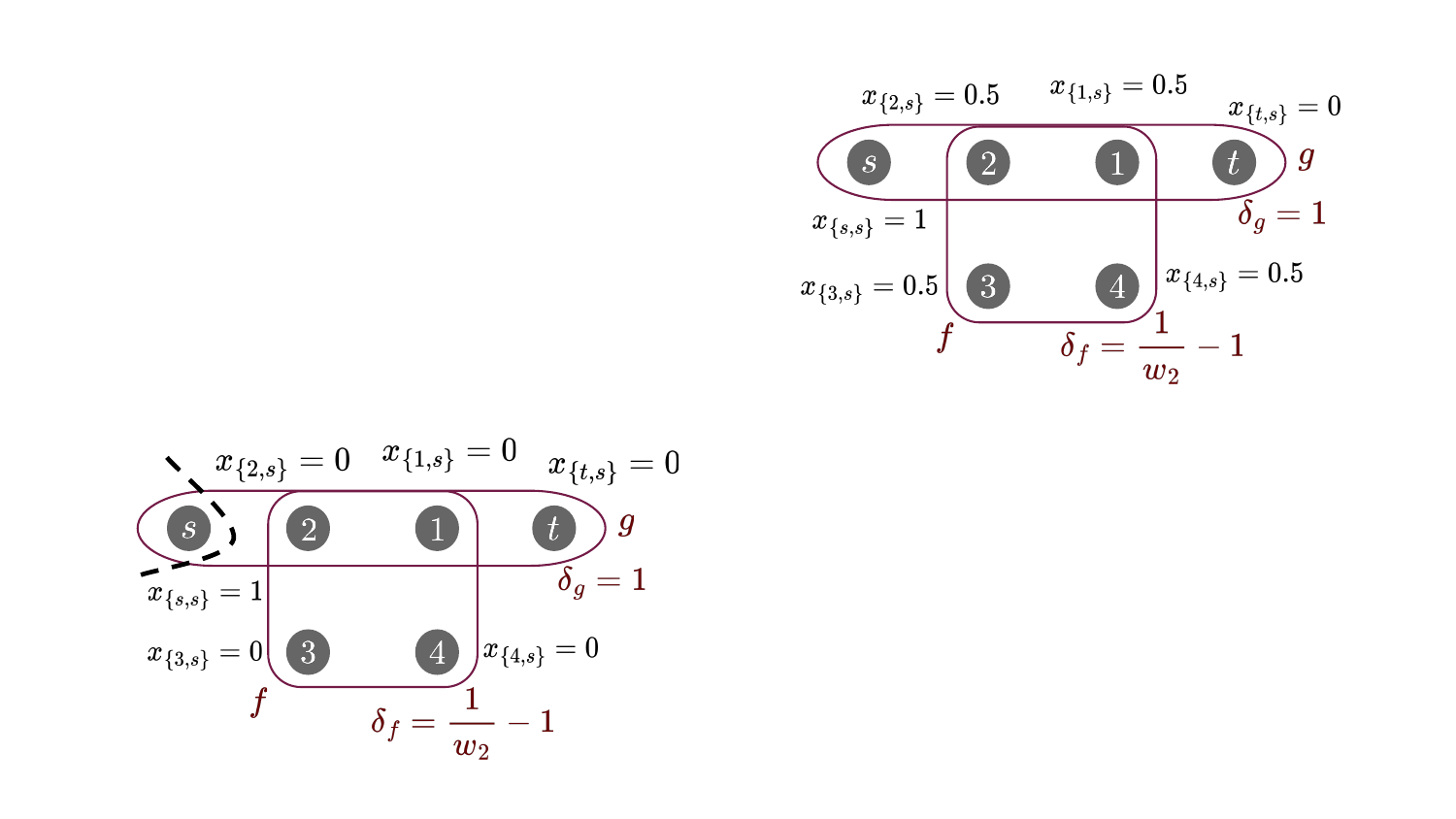}
        \label{fig:igap_opt_case1}
    }
    \hfill
    \subfigure[Basic LP solution with value $w_2$]{%
    \centering
        \includegraphics[width=0.42\linewidth]{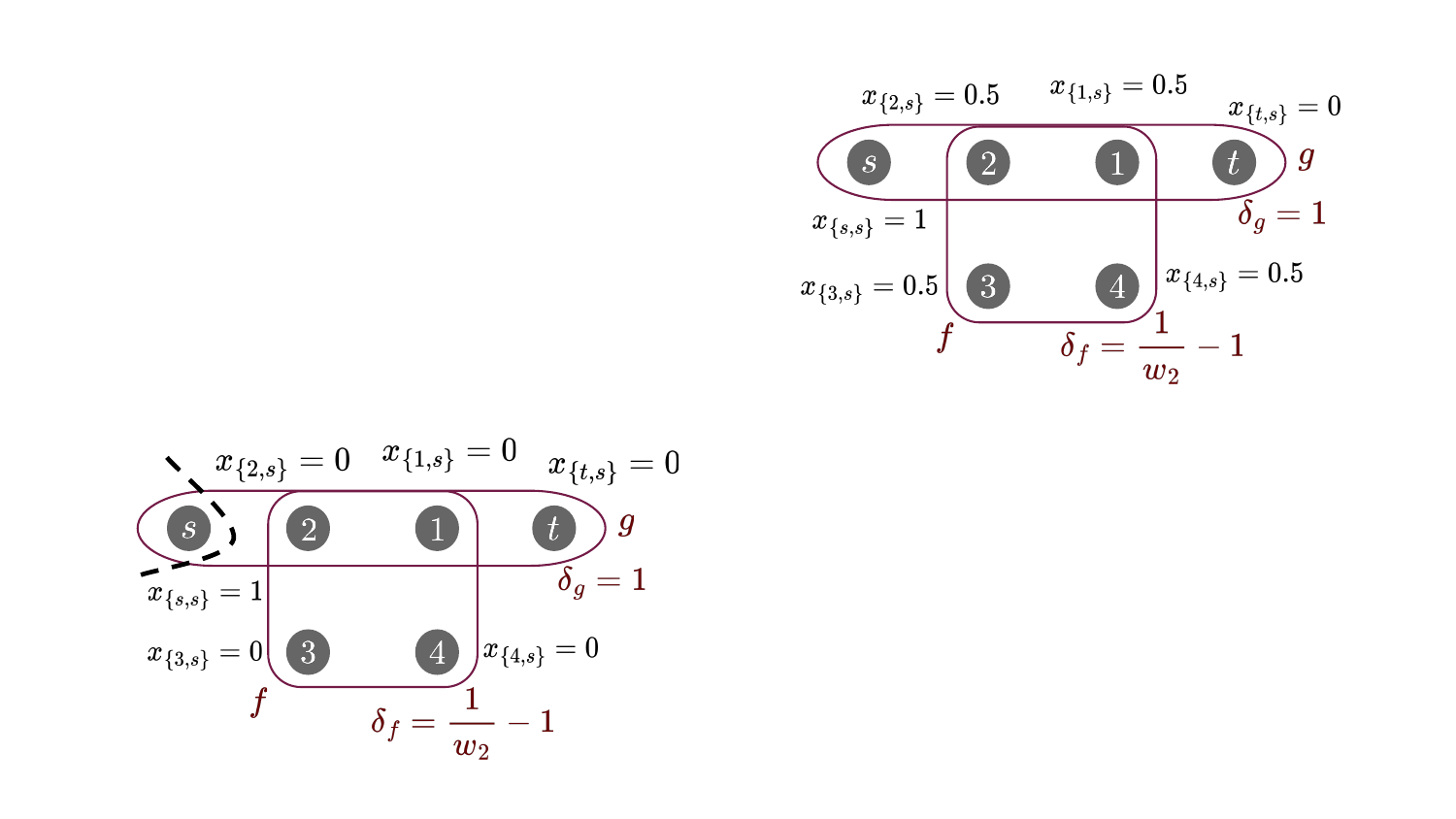} 
        \label{fig:igap_blp_case1}
    }
    \caption{Integrality gap instance of $4$-CB-cut when $w_2< w_1 = 1$.}
    \label{fig:igap_case1}
\end{figure}

Consider an instance of $4$-\cbcut{} with $w_1 = 1$ and $w_2 < 1$ defined by a hypergraph $\mathcal{H} = (\V ,\E)$ with six nodes $\V = \{1,2,3,4,s,t\}$ and two hyperedges, shown in Figure \ref{fig:igap_case1}. The first hyperedge $g=\{s,t,1,2\} \in \E$ has a weight $\delta_g = 1$, while the second hyperedge $f=\{1,2,3,4\} \in \E$ has weight $\delta_f = \frac{1}{w_2} - 1$. A minimum $s$-$t$ cut has a value of $\mathit{OPT}(\mathcal{H}) = 1$, which can be obtained by placing node $s$ on its own, $t$ on its own, or by cutting both $e$ and $f$ in an even $(2,2)$-split. Figure \ref{fig:igap_opt_case1} illustrates an optimal solution where node $s$ is placed by itself, along with the binary feasible variables for the Basic LP that represents this $s$-$t$ cut.

The Basic LP has a feasible fractional solution for the given instance where for each non-terminal node $v$ we have $x_{v,s} = x_{v,t} = 0.5$ (see Figure~\ref{fig:igap_blp_case1}). For variables $y_{f,A}$ where $A \subseteq f$, we can define $y_{f,\emptyset} = y_{f,\{1,2,3,4\}} = 0.5 $ and set $y_{f,A} = 0$ for every $A \notin \{\emptyset, f \}$.  For node $v \in \{1,2,3,4\}$ we have
\begin{align*}
    x_{v,s} &= 0.5 = y_{f,\{1,2,3,4\}} = \sum_{A \subseteq f \colon v \in A} y_{A,f} \\
    x_{v,t} &= 0.5 = y_{f, \emptyset} = \sum_{A \subseteq f \colon v \in f\backslash A} y_{A,f},
\end{align*}
so we see that constraints in~\eqref{lp1} and~\eqref{lp2} are satisfied for hyperedge $f$. For $ g = \{s,t,1,2\}$, set $y_{g,\{s,1\}} = y_{g,\{s,2\}} = 0.5$ and $y_{f,A} = 0$ for every $A \notin \{\{s,1\}, \{s,2\} \}$. We can confirm that the constraints in~\eqref{lp1} and~\eqref{lp2} are also satisfied for $g$:
\begin{align}
\begin{array}{ll}
        x_{1,s} = y_{g,\{s,1\}} + 0 = 0.5  &\quad x_{1,t} = y_{g,\{s,2\}} + 0 = 0.5\\
        x_{2,s} = y_{g,\{s,2\}} + 0 = 0.5  &\quad x_{2,t} = y_{g,\{s,1\}} + 0 = 0.5\\
        x_{s,s} = y_{g,\{s,1\}} + y_{g,\{s,2\}} = 1.0  &\quad x_{s,t} = 0 \\
        x_{t,t} = y_{g,\{s,1\}} + y_{g,\{s,2\}} = 1.0   &\quad x_{t,s} = 0.
        \end{array}
    \end{align}
Observe that $y_{f,A}\cdot\vw_f(A) = 0$ for every $A\notin \{\emptyset,f\}$ and $y_{g,A}\cdot\vw_g(A) = 0$ for every $A \notin \{\{s,1\},\{s,2\}\}$. The LP value for this feasible solution (which is in fact optimal for the LP) is therefore given by:
\begin{align*}
    \mathit{LP}(\mathcal{H}) &=\delta_f(y_{f,\emptyset} \cdot \vw_f(\emptyset)  + y_{f,\{1,2,3,4\}} \cdot \vw_f(\{1,2,3,4\})) + \delta_g(y_{g,\{s,1\}}\cdot w_2 + y_{g,\{s,2\}} \cdot w_2)\\
    &= \delta_g(w_2) = w_2 
\end{align*}
The gap between the integral and fractional solution is therefore
\begin{equation*}
    \frac{\mathit{OPT}(\mathcal{H})}{\mathit{LP}(\mathcal{H})} = \frac{1}{w_2}.
\end{equation*}
Thus, the Basic LP integrality gap is at least $\frac{1}{w_2}$, and we have the following corollary of Theorem~\ref{thm:ene}.
\begin{corollary}
    Assuming the Unique Games Conjecture, the $\textsc{CB}(4, \{w_1 = 1, w_2 <1\})$ problem cannot be approximated to within a factor better than $1/w_2$.
\end{corollary}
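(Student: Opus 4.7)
The plan is to instantiate Theorem~\ref{thm:ene} on the six-node instance $\mathcal{H}$ constructed immediately above the corollary statement (depicted in Figure~\ref{fig:igap_case1}). The theorem requires the underlying constraint language to include the NAE$_2$ cost function, i.e., the standard size-2 graph cut; this holds for $4$-\cbcut{} \textsc{with edges} as noted in the paragraphs preceding Theorem~\ref{thm:ene}. It then remains to verify that $\mathit{OPT}(\mathcal{H})/\mathit{LP}(\mathcal{H}) \ge 1/w_2$, which reduces to lower bounding $\mathit{OPT}(\mathcal{H})$ and upper bounding $\mathit{LP}(\mathcal{H})$ on this single instance.

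To show $\mathit{OPT}(\mathcal{H}) = 1$, I would give matching upper and lower bounds. The partition $S = \{s\}$ achieves cost $\delta_g \cdot w_1 = 1$ (only $g$ is cut, in a $(1,3)$-split), giving the upper bound. For the lower bound I would do a short case analysis on how the unavoidable cut of $g = \{s,t,1,2\}$ is split: a $(1,3)$-split of $g$ already contributes $\delta_g \cdot w_1 = 1$ by itself, while a $(2,2)$-split of $g$ separates nodes $1$ and $2$ and hence forces $f = \{1,2,3,4\}$ to be cut with penalty at least $w_2$, yielding total cost at least $\delta_g w_2 + \delta_f w_2 = w_2 + (1 - w_2) = 1$. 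Either way the cut value is at least $1$.

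For the LP value, I would invoke the half-integer fractional assignment already displayed in the preceding paragraphs: $x_{v,s} = x_{v,t} = 1/2$ for every non-terminal $v$, together with $y_{g,\{s,1\}} = y_{g,\{s,2\}} = 1/2$ on hyperedge $g$ and $y_{f,\emptyset} = y_{f,f} = 1/2$ on hyperedge $f$ (all other $y_{e,A}$ equal to zero). Feasibility is a routine hyperedge-by-hyperedge check of constraints~\eqref{lp1}--\eqref{lp2}, most of which is already spelled out in the text. The objective evaluates to $\delta_g \cdot w_2 = w_2$ because the nonzero $y$ variables on $f$ are supported on subsets with $\vw_f(\emptyset) = \vw_f(f) = 0$, leaving only the contribution from $g$. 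Hence $\mathit{LP}(\mathcal{H}) \le w_2$.

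Combining the two bounds gives $\mathit{OPT}(\mathcal{H})/\mathit{LP}(\mathcal{H}) \ge 1/w_2$, and Theorem~\ref{thm:ene} then yields the claimed UGC-hardness. The only conceptual step is the OPT lower bound argument that pairs a $(2,2)$-split of $g$ with the forced cut of $f$; everything else (LP feasibility and objective evaluation) is mechanical and largely pre-computed in the body of the text.
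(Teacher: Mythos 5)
Your proposal is correct and follows the paper's argument exactly: the same two-hyperedge gadget, the same half-integral Basic LP solution with value $w_2$, and the same appeal to Theorem~\ref{thm:ene}. The only difference is that you spell out the lower bound $\mathit{OPT}(\mathcal{H}) \ge 1$ via the case analysis on how $g$ is split (which the paper leaves implicit), and that detail checks out: a $(2,2)$-split of $g$ separates nodes $1$ and $2$, forcing $f$ to be cut with penalty at least $w_2$, for a total of $w_2 + (1/w_2 - 1)w_2 = 1$.
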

This lower bound matches the approximation we get by projecting to the nearest submodular penalties ($\hat{w}_1 = 1, \hat{w}_2 = 1)$, showing that this simple projection is optimal assuming UGC.

\paragraph{Integrality gap for $4$-CB-cut for $w_2 > 2$.}
\begin{figure}
    \centering
    \subfigure[Optimal solution with $s$-$t$ cut value $w_2$]{%
        \includegraphics[width=0.45\linewidth]{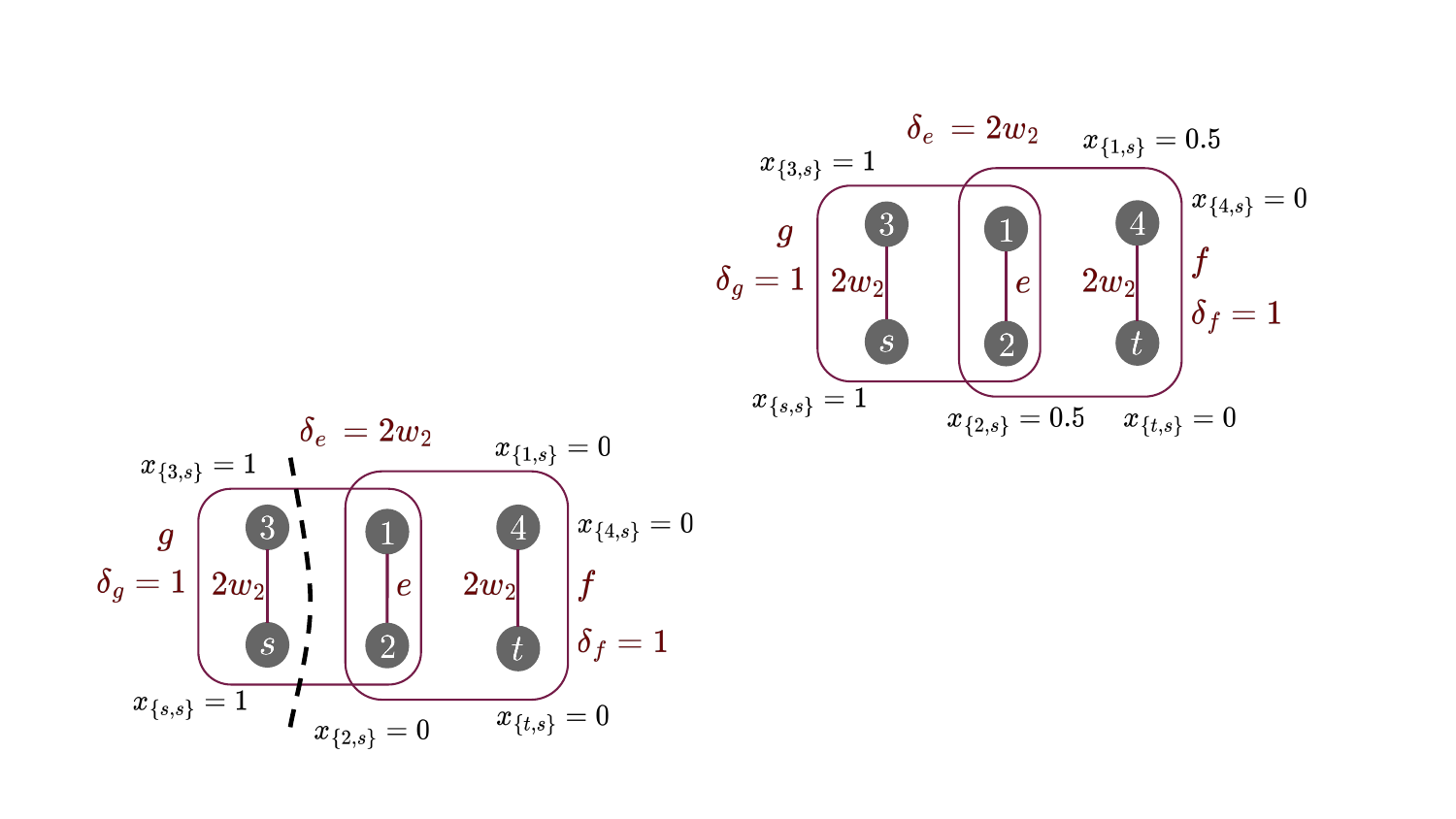}
        \label{fig:igap_opt_case2}
    }
    \hfill
    \subfigure[Basic LP solution with objective value $2$]{%
        \includegraphics[width=0.45\linewidth]{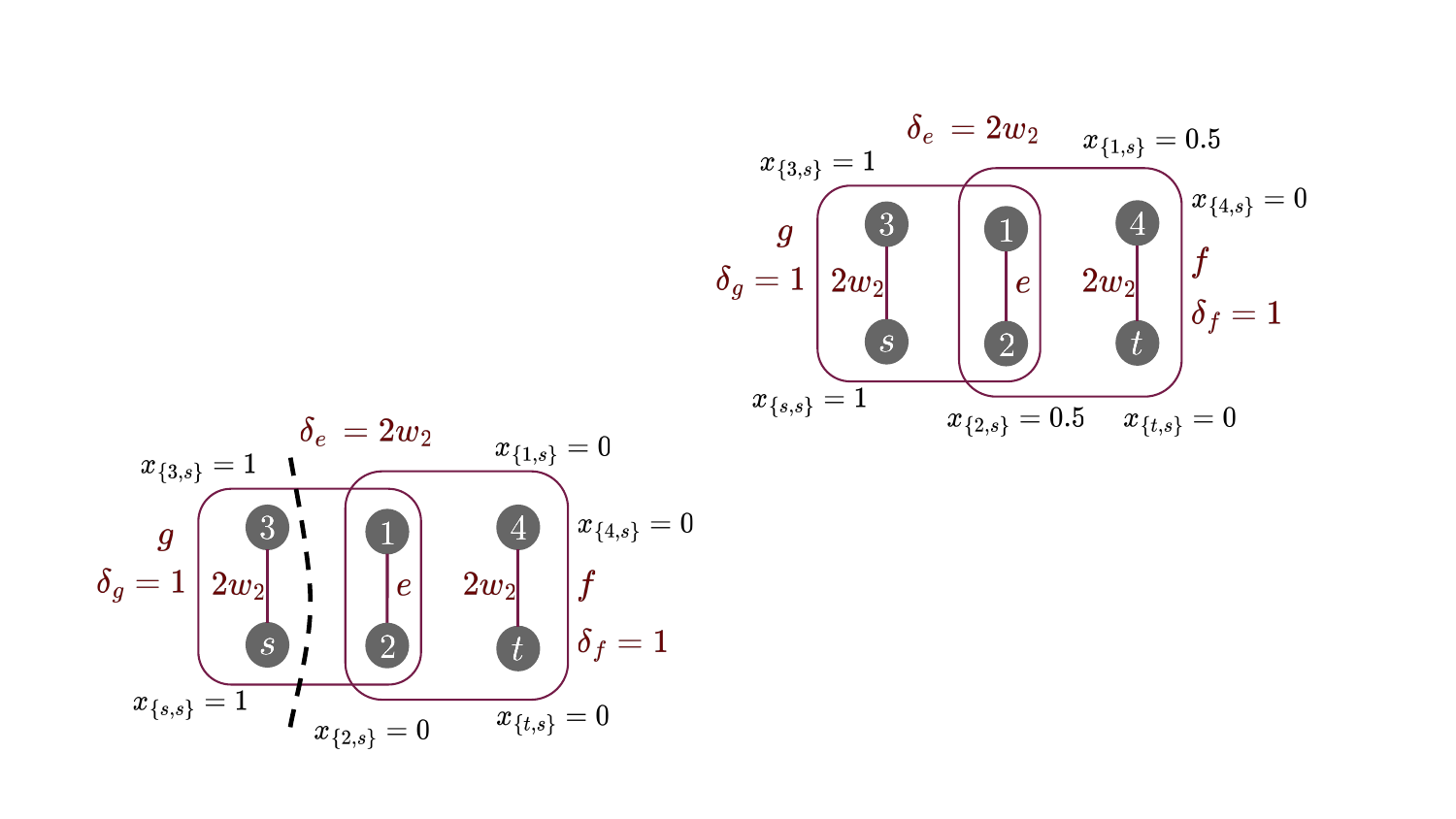} 
        \label{fig:igap_blp_case2}
    }
    \caption{Instance for 4-CB-cut when $w_2>2w_1$}
    \label{fig:igap_case2}
\end{figure}

Consider an instance of $4$-\cbcut{} given by the hypergraph $\mathcal{H} = (\V,\E)$ with six nodes $\V = \{1,2,3,4,s,t\}$ and five hyperedges, shown in Figure \ref{fig:igap_case2}. The hyperedges $f,g \in \E$ have weight $\delta_g = \delta_f = 1$, and edges $\{s,3\}$, $\{t,4\}$, and $\{1,2\}$ have weight $2w_2$. 
%
The minimum $s$-$t$ cut solution has a cut value of $\mathit{OPT}(\mathcal{H}) = w_2$, which can be achieved by cutting either hyperedge $g$ or $f$ in an even $(2,2)$ split. One such division with cut set $S = \{s,3\}$ is illustrated in Figure \ref{fig:igap_opt_case2}, along with binary feasible variables for this solution. The Basic LP has a feasible solution where $x_{3,s} = x_{4,t} = 1$, $x_{4,s} = x_{3,t} = 0$, and $x_{1,s} = x_{1,t} = x_{2,s} = x_{2,t} = 0.5$. 

For $g = \{s,1,2,3\}$, the edge variables are $y_{g,\{s,2,3\}} = y_{g,\{s,1,3\}} = 0.5$ and $y_{g,A} = 0$ for every $A\notin \{\{s,2,3\},\{s,1,3\}\}$. 
We can confirm that the constraints in~\eqref{lp1} and~\eqref{lp2} are satisfied:
\begin{align*}
\begin{array}{ll}
    x_{1,s} = y_{g,\{s,1,3\}} = 0.5 &\quad x_{1,t} = y_{g,\{s,2,3\}}= 0.5\\
    x_{2,s} = y_{g,\{s,2,3\}} = 0.5 &\quad x_{2,t} = y_{g,\{s,1,3\}} = 0.5\\
    x_{3,s} = y_{g,\{s,1,3\}} + y_{g,\{s,2,3\}} = 1 &\quad x_{3,t} = 0 \\
    x_{s,s} = y_{g,\{s,1,3\}} + y_{g,\{s,2,3\}} = 1 & \quad x_{s,t} = 0.
\end{array}
\end{align*}
For $f = \{t,1,2,4\}$, set $ y_{f,\{2\}} = y_{t,\{1\}} = 0.5$ and $y_{f,A} = 0$ for $A \notin \{\{2\},\{1\}\}$. We check constraints:
\begin{align*}
    \begin{array}{ll}
        x_{1,s} = y_{f,\{1\}} = 0.5 &\quad x_{1,t} = y_{f,\{2\}} = 0.5\\
        x_{2,s} = y_{f,\{2\}} = 0.5 &\quad x_{2,t} =  y_{f,\{1\}} = 0.5\\
        x_{4,s} = 0  &\quad x_{4,t} = y_{f,\{2\}} + y_{f,\{1\}} = 1\\
        x_{t,s} = 0 &\quad x_{t,t} =  y_{f,\{2\}} + y_{f,\{1\}}=1.
    \end{array}
\end{align*}
For edges $\{s,3\}$ and $\{t,4\}$, set $y_{\{s,3\},\{s,3\}} = 1$ and $y_{\{t,4\},\{t,4\}} = 1$. These also satisfy constraints since $x_{3,s} = 1$ and $x_{4,t} = 1$. For $e = \{1,2\}$, set $y_{e,e} = y_{e,\emptyset} = 0.5$, satisfying constraints:
 \begin{align*}
 \begin{array}{ll}
     x_{1,s} = y_{e,e} = 0.5 &\quad x_{1,t} = y_{e,\emptyset} = 0.5\\
     x_{2,s} = y_{e,e} = 0.5 &\quad x_{2,t} = 0.5.
 \end{array}
 \end{align*}
The LP value associated with this feasible solution is given by
\begin{align*}
    \mathit{LP}(\mathcal{H}) &= \delta_g \left[y_{g,\{s,2,3\}}  \vw_g(\{s,2,3\}) + y_{g,\{s,1,3\}} \vw_g(\{s,1,3\})\right] + \delta_f\left[y_{f,\{1\}} \vw_f(\{1\}) + y_{f,\{2\}}  \vw_f(\{2\}\right]\\
    &=\delta_g \cdot w_1 + \delta_f \cdot w_1 = 2.
\end{align*}
Consequently, the integrality gap is $\mathit{OPT}(\mathcal{H})/\mathit{LP}(\mathcal{H}) = w_2 /2$.
\begin{corollary}
    Assuming the Unique Games Conjecture, the $\textsc{CB}(4, \{w_1 = 1, w_2 > 2\})$ problem cannot be approximated to within a factor better than $w_2/2$.
\end{corollary}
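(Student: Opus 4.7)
The plan is to invoke Theorem~\ref{thm:ene} on the gap instance $\mathcal{H}$ depicted in Figure~\ref{fig:igap_case2}. Since $4$-\cbcut{} \textsc{with edges} explicitly includes the standard cut function on weighted size-$2$ hyperedges (which is precisely a scaled $\text{NAE}_2$ predicate), the hypothesis of Theorem~\ref{thm:ene} is satisfied, so it suffices to lower bound the integrality gap $\mathit{OPT}(\mathcal{H})/\mathit{LP}(\mathcal{H})$ by $w_2/2$. The upper bound $\mathit{LP}(\mathcal{H}) \leq 2$ is immediate from verifying the feasibility of the fractional solution already exhibited in the text and summing its objective, so the remaining work is to establish $\mathit{OPT}(\mathcal{H}) \geq w_2$.

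For the integer bound, the key observation is that the three weight-$2w_2$ edges $\{s,3\}$, $\{t,4\}$, and $\{1,2\}$ are each prohibitively expensive to cut: cutting any one of them alone already contributes $2w_2 > w_2$ to the cut value. Thus, in any $s$-$t$ cut of value strictly less than $w_2$, we must have $3 \in S$, $4 \in \bar{S}$, and $\{1,2\}$ on a common side. A short case analysis on whether $\{1,2\} \subseteq S$ or $\{1,2\} \subseteq \bar{S}$ then shows that exactly one of the size-$4$ hyperedges $g = \{s,1,2,3\}$ or $f = \{t,1,2,4\}$ must be split in a $(2,2)$ fashion, incurring penalty $w_2$. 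This contradicts the assumed cost and gives $\mathit{OPT}(\mathcal{H}) \geq w_2$. The matching upper bound is witnessed, for instance, by $S = \{s,1,2,3\}$, which leaves $g$ and all three weighted edges uncut while splitting $f$ evenly for a total cost of $w_2$.

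Combining the two bounds yields $\mathit{OPT}(\mathcal{H})/\mathit{LP}(\mathcal{H}) = w_2/2$, and invoking Theorem~\ref{thm:ene} completes the argument. The main subtlety lies in justifying $\mathit{OPT}(\mathcal{H}) \geq w_2$: one must rule out the possibility of beating $w_2$ via a clever mixture of $(1,3)$-splits on $g$ and $f$, which is precisely what the alignment forced by the heavy edges (combined with the two-case analysis on the side of $\{1,2\}$) accomplishes. The remaining LP verification is routine bookkeeping, following the template already set up in the $w_2 < 1$ case.
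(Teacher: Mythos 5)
Your proposal is correct and follows essentially the same route as the paper: the same gap instance from Figure~\ref{fig:igap_case2}, the same feasible fractional solution giving $\mathit{LP}(\mathcal{H}) \le 2$, and the same appeal to Theorem~\ref{thm:ene}. Your explicit case analysis showing $\mathit{OPT}(\mathcal{H}) \ge w_2$ (heavy edges force $3\in S$, $4\in\bar S$, and $\{1,2\}$ together, so one of $g$ or $f$ must take a $(2,2)$-split) is a detail the paper asserts without argument, and it is a worthwhile addition rather than a different approach.
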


      \section{Conclusion}
        This paper draws a connection between generalized hypergraph cut problems and Valued Constraint Satisfaction Problems, providing a simple way to settle a recently posed question on the tractability of non-submodular cardinality-based hypergraph $s$-$t$ cut problems~\cite{aksoy2023seven}. Our results confirm that the latter problem is NP-hard for all non-submodular parameter choices except for a degenerate case where a zero-cost solution is easy to achieve. Our main contribution is to provide the first systematic study of approximation algorithms for non-submodular cardinality-based $s$-$t$ cuts. In particular, we developed a greedy algorithm that provides an optimal way to project a non-submodular splitting function to the submodular region. We complemented this by proving---in the case of 4-node hyperedges---that the approximation factors achieved by this projection method are the best possible assuming the Unique Games Conjecture. One open direction is to pursue stronger hardness of approximation results that do not rely on UGC. In particular, our APX-hardness results only show that it is NP-hard to achieve an approximation factor better than some constant $c \in (1,18/17)$, even for splitting parameter that are far from the submodular region where our best approximation factors can be arbitrarily bad. Another direction is to apply our approximation techniques to improve downstream hypergraph clustering problems where the most meaningful choice of cut function is only approximately submodular. 
      
	\bibliographystyle{plain}
	\bibliography{preprint.bib}
\end{document}